\declaretheoremstyle[%
  headfont=\bfseries,%
  headpunct={},%
  postheadspace=\newline,%
  notefont=\normalfont\bfseries,%
  notebraces={--~}{},
]{definitionstyle}
\newtheorem{lemma}{Lemma}
\let\old@ps@headings\ps@headings
\let\old@ps@IEEEtitlepagestyle\ps@IEEEtitlepagestyle
\def\confheader#1{%
	\def\ps@headings{%
		\old@ps@headings%
		\def\@oddhead{\strut\hfill#1\hfill\strut}%
		\def\@evenhead{\strut\hfill#1\hfill\strut}%
	}%
	\def\ps@IEEEtitlepagestyle{%
		\old@ps@IEEEtitlepagestyle%
		\def\@oddhead{\strut\hfill#1\hfill\strut}%
		\def\@evenhead{\strut\hfill#1\hfill\strut}%
	}%
	\ps@headings%
}
\begin{document}
\bstctlcite{IEEEexample:BSTcontrol}
\pagenumbering{arabic}

\title{Queuing Analysis of Opportunistic Cognitive Radio IoT Network with Imperfect Sensing}

\author{Asif Ahmed Sardar, Dibbendu Roy, Washim Uddin Mondal and Goutam Das}

\maketitle

\begin{abstract}
In this paper, we analyze a Cognitive Radio-based Internet-of-Things (CR-IoT) network comprising a Primary Network Provider (PNP) and an IoT operator. The PNP uses its licensed spectrum to serve its users. The IoT operator identifies the white-space in the licensed band at regular intervals and opportunistically exploits them to serve the IoT nodes under its coverage. IoT nodes are battery-operated devices that require periodical energy replenishment. We employ the Microwave Power Transfer (MPT) technique for its superior energy transfer efficiency over long-distance. The white-space detection process is not always perfect and the IoT operator may jeopardize the PNP's transmissions due to misdetection. To reduce the possibility of such interferences, some of the spectrum holes must remain unutilized, even when the IoT nodes have data to transmit. The IoT operator needs to decide what percentage of the white-space to keep unutilized and how to judiciously use the rest for data transmission and energy-replenishment to maintain an optimal balance between the average interference inflicted on PNP's users and the Quality-of-Service (QoS) experienced by IoT nodes. Due to the periodic nature of the spectrum-sensing process, Discrete Time Markov Chain (DTMC) method can realistically model this framework. In literature, activities of the PNP and IoT operator are assumed to be mutually exclusive, for ease of analysis. Our model incorporates possible overlaps between these activities, making the analysis more realistic. Using our model, the sustainability region of the CR-IoT network can be obtained. The accuracy of our analysis is demonstrated via extensive simulation.
\end{abstract}
\begin{IEEEkeywords}
Cognitive Radio, Internet-of-Things, Smart City, Microwave Power Transfer, Discrete Time Markov Chain
\end{IEEEkeywords}

\IEEEpeerreviewmaketitle
	
\section{Introduction}
In the near future, 5G technology is expected to revolutionize wireless networks. One of the key components of the 5G network is envisioned to be the Internet-of-Things (IoT) \cite{zhang20196g}. IoT is a network of interconnected devices that are uniquely addressable, based on standard communication protocols \cite{perera2013context}. Though the devices may form IoT networks via either wired or wireless communication technology, the latter is more suitable in terms of cost-effectiveness and providing a connection to remote users.

\textit{Smart City} is going to emerge as one of the most important applications of IoT in the foreseeable future. Major goals of smart city are - efficient use of public resources, provide a better Quality-of-Service (QoS) to the citizens and reduce operational costs of public administrations. In smart city applications, data is expected to be collected by the IoT sensors from different environmental sources and sent to a central processor for future decisions \cite{samir2019uav}. For the majority of these applications (such as - traffic congestion tracking, air quality, and noise monitoring, etc.), the delay requirement is not very stringent and can withstand up to few minutes \cite{zanella2014internet}. So, it is not economical to use dedicated bandwidth for such services, as that can lead to underutilization of spectrum, which is already a scarce resource. The idea of incorporating Cognitive Radio Network (CRN) into IoT has been explored in recent works \cite{tragos2013cognitive, salameh2019spectrum, gu2019minimizing, ansere2019reliable}. Cognitive Radio enabled IoT network is generally referred to as a CR-IoT network. A CR-IoT system can operate either in \textit{Opportunistic} mode (using licensed spectrum during owner's inactivity) or in \textit{Underlay} mode (using licensed spectrum any time while maintaining a tolerable interference on the owner).

Two major practical constraints of a CR-IoT system are low battery life and low data storage capacity of the IoT nodes. Data needs to be transferred to a central system with larger data storage and superior processing capability to prevent data loss due to the limited buffer size of the IoT sensors. The other primary concern of a CR-IoT system is maintaining the energy requirements of the devices connected to it. Energy-harvesting is an effective solution for prolonging a CR-IoT network's lifespan by providing energy to its IoT nodes and keeping them operational \cite{kamalinejad2015wireless}. In our work, we employ Wireless Power Transfer (WPT) method via microwave frequency as a means for the nodes to harvest energy, which is also known as the Microwave Power Transfer (MPT) \cite{bi2015wireless}. The IoT nodes can utilize the spectrum holes for energy harvesting and data transfer in a time-switching fashion. Therefore, it is of utmost importance to inspect the proper balance between these two actions so that both the QoS requirements of the IoT operator and the interference limit imposed by the PNP are satisfied.

In literature, few works are available that analyze the relationship between different QoS metrics and energy harvesting \cite{mohapatra2018spectrum, aslam2018energy, salameh2018spectrum}. All of these works employ energy harvesting from ambient Radio Frequency (RF) sources (for example - spectrum owner's transmission, environmental source). These traditional methods of energy harvesting are not only energy inefficient but also are unable to supply energy to the IoT sensors over a large distance. The MPT method has emerged as the alternate energy harvesting method that can provide energy to the IoT nodes efficiently over a large distance via beamforming \cite{huang2015cutting}. Unfortunately, one of the major drawbacks of MPT is it can interfere with the license owner's transmission. So, MPT should be used in a CR-IoT network that operates in opportunistic mode. To the best of our understandings, this is the first work that incorporates the MPT technique in a CR-IoT system.

\begin{figure}
	\centering
	\includegraphics[width=87 mm]{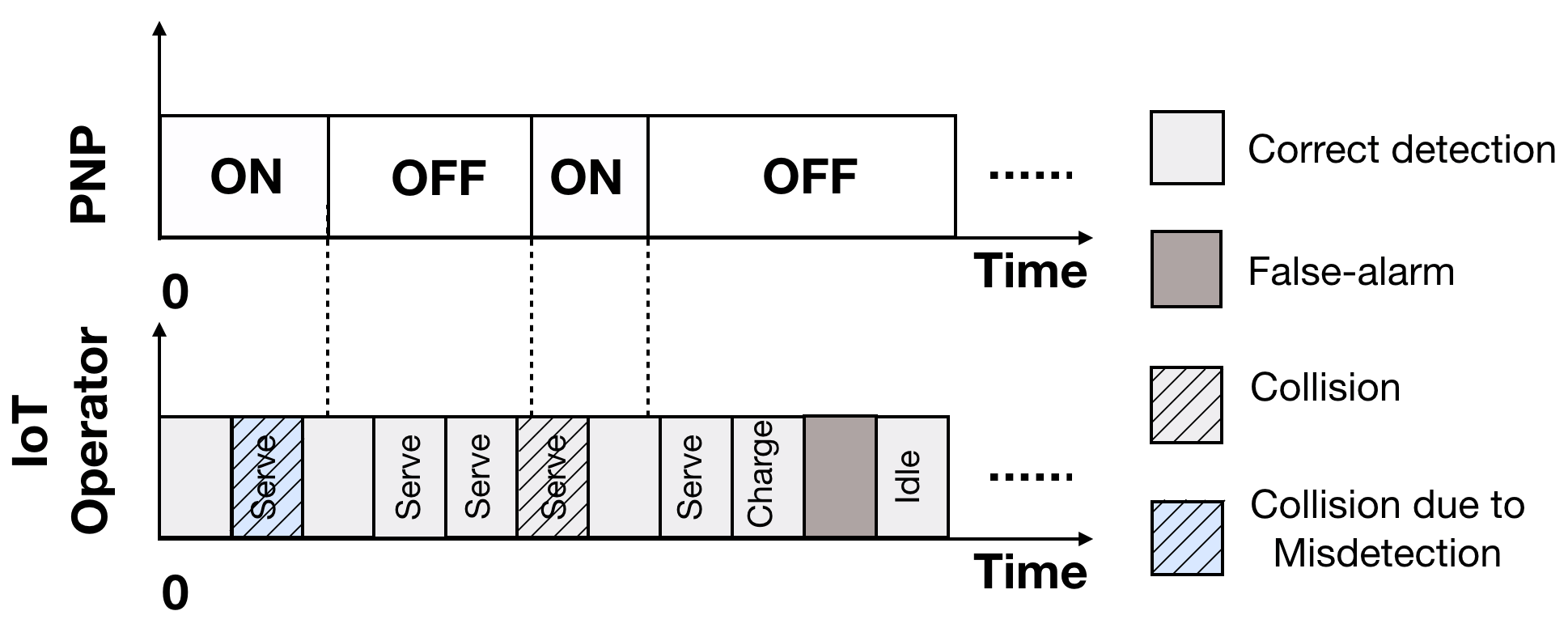}
	\caption{Activity of PNP and IoT operator}
	\label{Fig_Activities}
\end{figure}
In this paper, the PNP's activity is modeled as a sequence of ON and OFF periods which, similar to \cite{wang2010delay, wang2012characterization, liang2010delay}, are assumed to follow two distinct exponential distributions. Needless to say, the boundaries defining the activity of the PNP are completely arbitrary. In contrast, the IoT operator's activity must follow its spectrum sensing process which is scheduled to be performed by its sensors at regular intervals of times [Fig. \ref{Fig_Activities}]. Note that, the sensing process is prone to false-negative errors and can occasionally detect the PNP to be inactive even in the presence of its activity. Moreover, due to the arbitrariness of the boundaries of its ON periods, the PNP can resuscitate its activity in the middle of two consecutive sensing events, even if it were inactive at the first sensing epoch. Both of these scenarios have been shown by diagonally lined shaded regions in Fig. \ref{Fig_Activities}. Correct detection and false-alarm situations have been also depicted in the same figure. 

As a result of these sensing imperfections, the IoT operator might end up jeopardizing the PNP's intended transmissions by a significant amount if it decides to utilize each detected white-space to either transmit data or to harvest energy via MPT. In this paper, we circumvent this problem by forcing the IoT operator to remain idle for a certain fraction of white spaces even when it has sufficient capability to utilize it. The percentage of idle white-spaces works as a control parameter that can be used to tune the right balance between the QoS of the IoT network and the amount of interference inflicted on the PNP's users. We use the Discrete Time Markov Chain (DTMC) based framework to accurately analyze the above scenario.

\subsection{Our Contributions}
In this work, we consider a CR-IoT network, consisting of a Primary Network Provider (PNP) and an IoT operator, running in opportunistic mode. We assume that the IoT operator uses the MPT method for energy harvesting. For this purpose, the IoT operator deploys several Power Beacons (PBs) throughout the network. Our contributions can be summed up as follows:
 
 1. Due to the slotted sensing by the IoT operator, it is not possible to monitor PNP's activity if it changes in between two consecutive spectrum-sensing points. We take into account such possibilities and model the framework using DTMC based method.
 
 2. PNP imposes an interference limit to prevent its transmission from being jeopardized by the IoT operator's activity in case of misdetection by its CR sensors. We introduce the concept of idleness decision of the PBs that allows the IoT operator more flexibility to keep the interference within the constraint.
 
 3. Each PB of the IoT operator has three possible actions - remaining idle, charging IoT nodes, and collecting sensed data. Using our method, it is possible to obtain the optimal usage of PNP's inactivity period for performing these actions.
 
 4. Our model can determine the influences on the various QoS metrics of the CR-IoT network due to the energy harvesting process via the MPT method.
 
 5. When all the system parameters are known, we can determine the sustainability region (satisfying QoS requirements of the IoT operator and the interference limit demanded by PNP) of the CR-IoT network.

\subsection{Literature Survey}
In the existing literature, quite a few works on the analysis of CR-IoT systems are available. These works attempt to evaluate various QoS parameters (such as - age of information, delay, packet drop probability, jitter, etc.) of the CR-IoT systems. Some of these works also address the energy harvesting issue and utilize ambient sources (primarily PNP's transmission) for this purpose. MPT method has the advantage over energy harvesting via an ambient source in terms of energy efficiency and long-range transmission. Few works have been done analyzing MPT harvesting techniques in IoT systems. To the best of our knowledge, QoS analysis of a CR-IoT system that employs MPT has not been performed in the literature.

In \cite{liang2010delay}, the authors carried out the delay analysis of an opportunistic Cognitive Radio Sensor Network (CRSN) supporting real-time traffic. The authors of \cite{wang2010delay} and \cite{wang2012characterization} inspected the delay of a CR-IoT network and proposed an adaptive algorithm to control the packet generation rate of secondary users that minimizes queue length. In \cite{li2018network}, the authors obtained a unique set of optimal sensing parameters, that allows an opportunistic narrowband CR-IoT network to attain maximum throughput. In all of these works, the activities of the PNP and the IoT operator are assumed to be mutually exclusive. In a practical scenario, these activities are not synchronized and may get overlapped. In our work, we address this issue and develop a DTMC-based method that models the actual system quite accurately.

Due to the long-range and better energy efficiency of the MPT mechanism, it is the potential to become the predominant energy harvesting method in IoT systems. Several works have been done which tackle different problems related to energy harvesting via the MPT method. The authors of \cite{choi2018distributed} developed a distributed wireless power transfer system to improve the power transfer efficiency. In \cite{wang2019truthful}, the authors examined an IoT system comprised of several IoT operators and a single energy provider. They devised a mechanism to enforce truthfulness regarding the channel gains of the IoT nodes belonging to competing operators.
\section{System Model}
\begin{figure}
	\centering
	\includegraphics[width=80 mm]{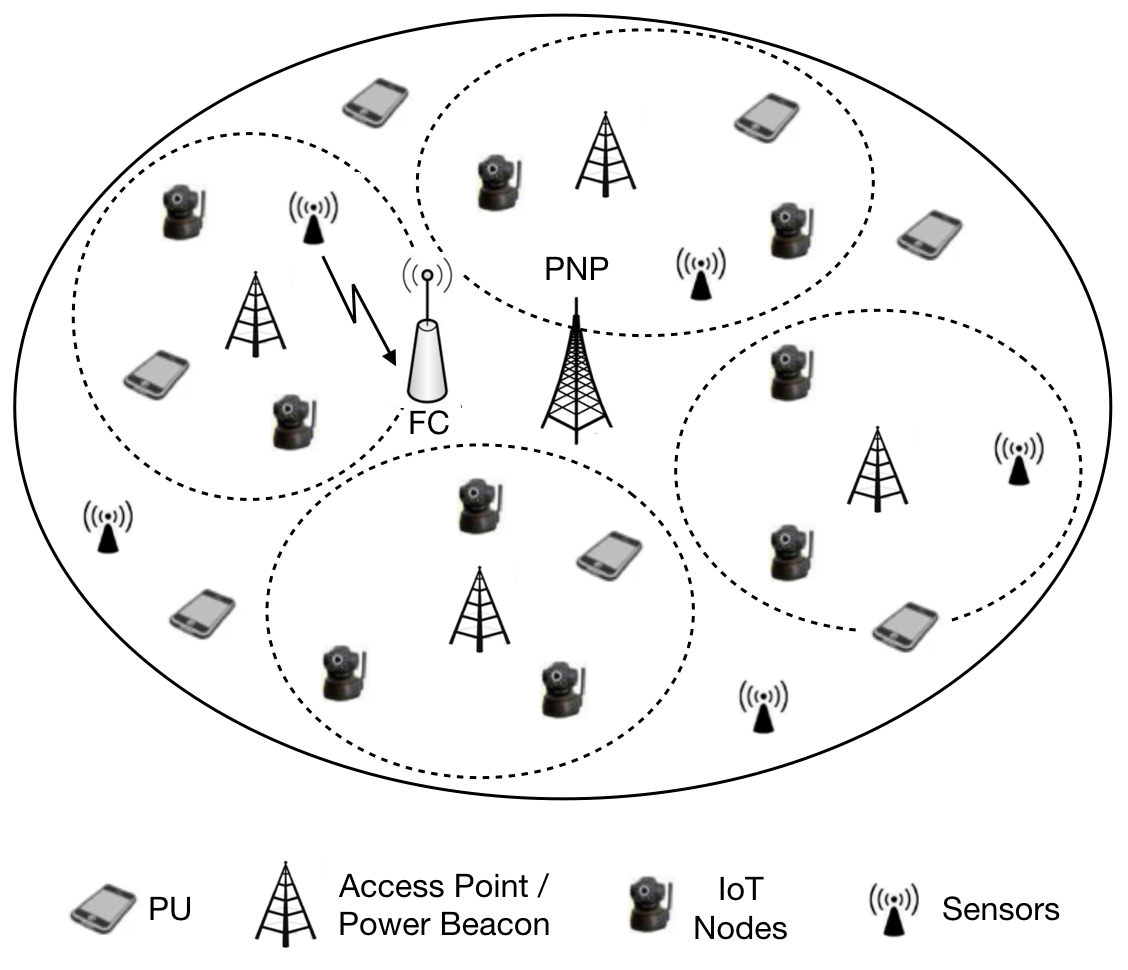}
	\caption{A schematic diagram of the system model}
	\label{Fig_System_Model}
\end{figure}
We analyze a wireless powered Cognitive Radio based Internet-of-Things (CR-IoT) network formed between a Primary Network Provider (PNP) and an IoT operator in an area $\mathcal{A}\subset\mathbb{R}^2$ [Fig. \ref{Fig_System_Model}]. PNP is assumed to have the ownership of the spectrum in that particular area. In our framework, we consider the CR-IoT network to operate in opportunistic mode, i.e. the IoT operator is allowed to access PNP's licensed spectrum in exchange for monetary payment, only during the inactivity period of PNP. To monitor PNP's activity, the IoT operator installs Cognitive Radio (CR) sensors throughout the area. It also deploys a Fusion Center (FC) for processing the sensing data collected by the CR sensors and takes the final decision regarding PNP's activity. The IoT operator deploys IoT nodes/sensors in the area to collect information for a certain application. To accumulate the data collected by the IoT nodes, several Access Points (APs) are put in the area by the IoT operator. IoT nodes have low battery life, so they must be recharged occasionally. Replenishing the batteries of the IoT nodes is the responsibility of the APs. So they also function as Power Beacons (PBs). In the rest of the paper, the terms Access Point (AP) and Power Beacon (PB) will be used interchangeably.

\subsection{System Description}
PNP's base-station is located at the center of the network, and its users are distributed in $\mathcal{A}$ following a Poisson Point Process. We consider only the downlink communication between PNP's base-station and its users. IoT nodes are allowed to transmit data to the APs or receive power from PB for wireless charging, only during the inactivity period of PNP's base-station. IoT nodes usually have low battery life. So their energy must be replenished from time to time. In our work, PBs are assumed to operate in Microwave Power Transfer (MPT) model for recharging the IoT nodes. Due to propagation loss, MPT cannot successfully recharge an IoT node beyond a certain distance. It should also be noted that a certain power threshold must be maintained at an IoT node to successfully recharge it. The propagation loss of the MPT method ensures negligible mutual interference on the transmission/charging process of IoT nodes belonging to different APs \cite{huang2014enabling}.

Each IoT node is associated with the nearest PB. Data packets generated at the IoT sensors are sent back to the associated AP, from which the IoT operator gathers information for application purposes. IoT nodes are generally small devices, which are not equipped to perform complex processing and do not have simultaneous wireless information-and-power transfer technology. As a result, the recharging process and the data transmission from IoT nodes to AP are mutually exclusive events.

IoT nodes have limited storage capability. So if its buffer becomes full, data packets may be dropped due to a lack of opportunity to transmit data to its AP. It is assumed that A cluster of IoT nodes belonging to an AP cannot interfere with the data transmission process of IoT nodes associated with other APs, due to the large distance between them and also due to low transmission power capability. For a given AP, a balance between charging and transmission must be maintained, while satisfying the QoS requirement of the IoT operator, and also the amount of interference incurred on PNP's transmission by the IoT operator's activity.

\subsection{Traffic Description}\label{section_iib}
Activity of PNP's base-station can be modelled as a sequence of ON and OFF intervals \cite{wang2010delay}. During the ON periods, PNP's base-station transmits to its users. Both the ON and OFF periods of PNP are assumed to be exponentially distributed \cite{wang2010delay}, and their distributions are denoted by $f_{\text{on}}(t)=\mu_{\text{on}} e^{-\mu_{\text{on}} t}$ and  $f_{\text{off}}(t)=\mu_{\text{off}} e^{-\mu_{\text{off}} t}$ respectively. Complementary CDF of ON and OFF intervals are represented by $F_{\text{on}}(t)$ and $F_{\text{off}}(t)$ respectively. The activity factor of the PNP is given by
\begin{align}\label{activity_factor}
\beta=\frac{\mathbb{E}\left[f_{\text{on}}(t)\right]}{\mathbb{E}\left[f_{\text{off}}(t)\right]+\mathbb{E}\left[f_{\text{on}}(t)\right]}=\frac{\mu_{\text{off}}}{\mu_{\text{on}}+\mu_{\text{off}}}
\end{align}

Consider a particular AP, which has $n$ IoT nodes inside its charging radius. The IoT operator performs sensing at a regular interval of $d$, which we define as a \textit{time-slot}. At the beginning of each slot, the IoT operator makes a decision regarding PNP's activity, using the sensing information collected by the CR sensors. If the IoT operator perceives PNP to be inactive, the AP can use that slot for clearing data from one of its nodes, or it can recharge the IoT nodes. Let $\xi \in (0,1)$ be the probability that a particular slot is used for charging. The minimum power level that must be maintained at an IoT node for successfully recharging it, is denoted by $P_{c}$. If PB decides to use a slot for data transmission, it selects an IoT node uniformly randomly and clears a single packet from it. It is assumed that a single slot duration is required to clear a packet.

Detection and false-alarm probabilities of the IoT operator are denoted by $P_D$ and $P_F$ respectively. In case of misdetection, PNP's transmission gets adversely affected by the IoT network's activity. We introduce the concept of idleness, which allows the IoT operator to reduce the probability of interfering with PNP's transmission. Even when the IoT network perceives PNP to be inactive, the AP may decide to neither charge nor clear the packet, while maintaining its QoS requirements. Idleness probability is denoted by $\theta \in [0,1]$. Our goal is to find the charging probability ($\xi$), idleness probability ($\theta$), and also the optimal power level ($P_{\text{th}}$) that should be maintained by the PB at each of its IoT nodes.

The data generation process at each IoT node is assumed to follow a Poisson distribution with mean $\lambda$. Each node has a finite buffer of length $K$ where it can store generated data. Service times of the $i^{th}$ IoT node follow the continuous distribution $b_i(t)$. Service time of an IoT node is the time required to clear one packet from the buffer. Service time distributions of the IoT nodes are assumed to be mutually independent. All IoT nodes are identical in terms of functionality. So, it is natural that each node gets an equal opportunity to transmit its data.

\subsection{Representative Sensor}
We consider a \textit{representative} (or \textit{equivalent}) IoT node $T$, which is equivalent to all the sensors (combined) under a particular AP. Aggregate data generation rate at the representative sensor follows Poisson distribution with mean $n\lambda$. Buffer-length of $T$ is $K$, same as the individual IoT node. Probability of $k$ arrivals to $T$ in a time interval $t$ is denoted as $A(t,k)=\frac{e^{-n\lambda t}(n\lambda t)^k}{k!}$, and the inter arrival time distribution is given by $a(t)=n\lambda e^{-n\lambda t}$. Clearing a packet from $T$ requires a single slot duration ($d$). We can model the representative sensor as a Discrete Time Markov Chain (DTMC).

\section{DTMC Analysis}
At the beginning of a slot (at any particular time $t$), the number of packets left in the system is denoted by $i$. PNP's activity at that instant of time is denoted by $\phi \in \{0,1\}$. $\phi = 0$ means PNP is inactive at that point of time and similarly, $\phi = 1$ denotes PNP's ongoing transmission. Depending upon the sensing information and idleness probability, AP can be in either of the three modes, namely Idle ($0$), service ($1$), and charging ($2$), in the slot starting at $t$. AP's decision regarding its activity in that slot is denoted by $\psi \in \{0,1,2\}$. After the completion of this slot (at time $t+d$), the number of packets remaining in the system is denoted by $j$. So, the state description at time $t$ is $(i,\phi,\psi)$. Before proceeding to calculate various transition probabilities among the states, we need to inspect the probability of transition of PNP's activity at the beginning and the end of a slot.

\subsection{Transition Probability Regarding PNP's Activity}
Given that the PNP is in the ON period at the beginning of a slot (at time $t$), the probability that PNP will be in ON or OFF period at the end of that slot (at time $t+d$), are denoted by $\alpha_{11}$ and $\alpha_{10}$ respectively. So, clearly $\alpha_{10}=1-\alpha_{11}$. Similarly, the probability of PNP being in the OFF of ON period at the end of the slot, given that, it was in the OFF period at the beginning of the slot, are given by $\alpha_{00}$ and $\alpha_{01}=1-\alpha_{00}$. We shall describe the method to calculate $\alpha_{11}$, when ON and OFF period distributions are known.
\begin{figure}
	\centering
	\includegraphics[width=87 mm]{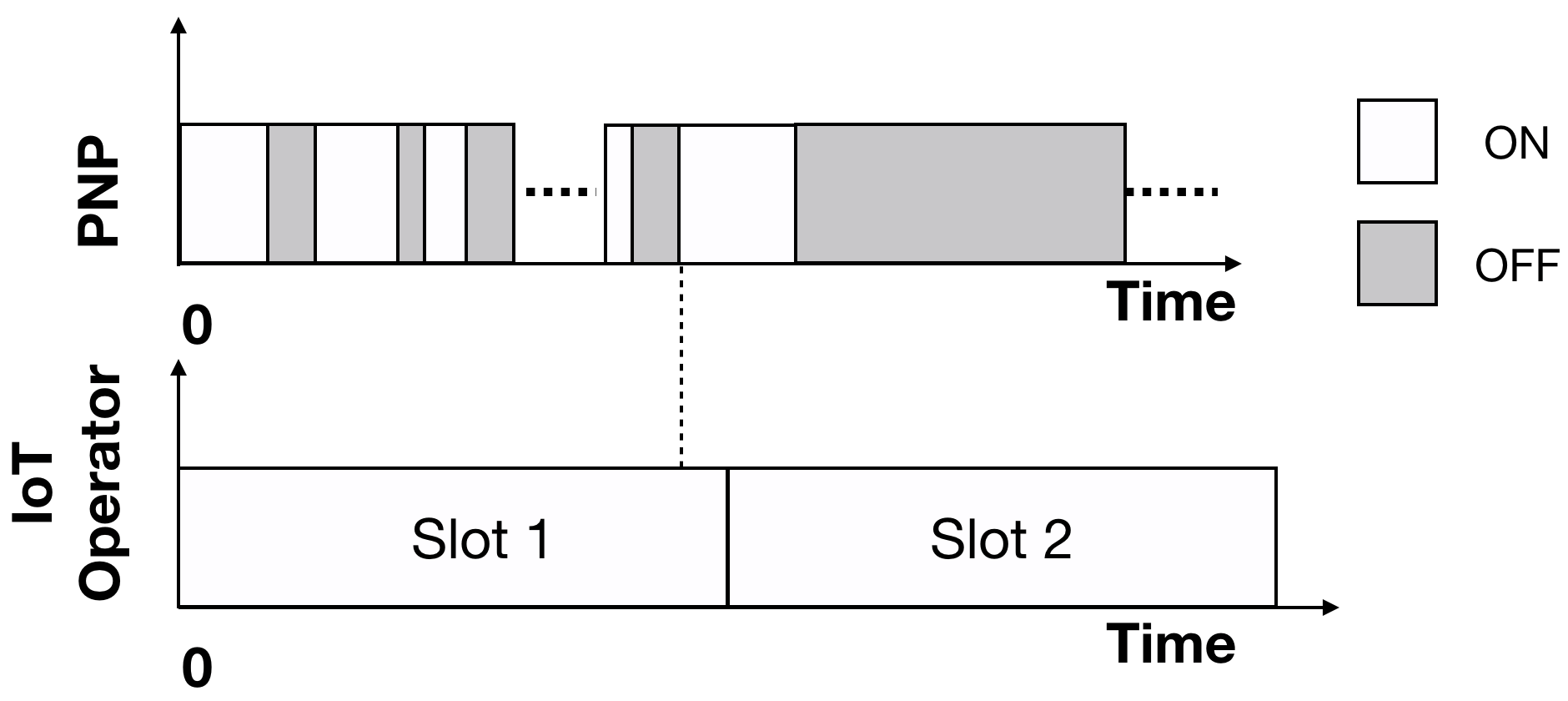}
	\caption{PNP is active at the beginning and end of a slot}
	\label{Fig_Alpha_Calc}
\end{figure}
If PNP remains in ON period at both the beginning and end of a slot, that means either a single ON period encompasses the whole slot, or a sequence of ON-OFF intervals must occur with a final ON period that includes the endpoint of the slot [Fig. \ref{Fig_Alpha_Calc}]. As both the ON and OFF distributions are exponential in nature, they have memoryless property. Probability that a single ON period of PNP encompasses a slot is given by $F_{\text{on}}(d)=e^{-\mu_{\text{on}}d}$.

Now we shall consider the other case (sequence of ON-OFF periods with final ON interval that includes the slot-end). Let $U_k$ be the random variable that describes the duration of $k$ consecutive pairs of ON-OFF intervals. Laplace transform of ON and OFF interval distributions are denoted by $F_{\text{on}}(s)$ and $F_{\text{off}}(s)$ respectively. Laplace transform of the distribution followed by $U_k$ can be written as $H_k(s)=[F_{\text{on}}(s)F_{\text{off}}(s)]^k$. These $k$ pairs of ON-OFF intervals must start at the beginning of a slot, and finish before the completion of that slot. Finally, another ON period occurs encompassing the end of the time-slot. The probability of such an event (denoted by $R_k$) can be written as
\begin{align}
R_k=\int\limits_{t=0}^{d}\mathscr{L}^{-1}\left\{H_k(s)\right\}(t) ~F_{\text{on}}(d-t)~\mathrm{d}t
\end{align}
The total probability of PNP being in ON mode at the end of the slot, given that it was in ON mode at the beginning of the slot, can be written as
\begin{align}
\begin{split}
\alpha_{11}&=e^{-\mu_{\text{on}}d}+\sum_{k=1}^{\infty}R_k\\
&=e^{-\mu_{\text{on}}d}+\int\limits_{t=0}^{d}\mathscr{L}^{-1}\left\{\sum_{k=1}^{\infty}H_k(s)\right\}e^{-\mu_{\text{on}}(d-t)}\\
&=e^{-\mu_{\text{on}}d}+\int\limits_{t=0}^{d}\mathscr{L}^{-1}\left\{\frac{F_{\text{on}}(s)F_{\text{off}}(s)}{1-F_{\text{on}}(s)F_{\text{off}}(s)}\right\}e^{-\mu_{\text{on}}(d-t)}\mathrm{d}t\\
&=e^{-\mu_{\text{on}}d}+\int\limits_{t=0}^{d}\frac{\mu_{\text{on}}\mu_{\text{off}}\left(1-e^{-\left(\mu_{\text{on}}+\mu_{\text{off}}\right)t}\right)}{\mu_{\text{on}}+\mu_{\text{off}}} e^{-\mu_{\text{on}}(d-t)}\mathrm{d}t\\
&=\frac{\mu_{\text{off}}+\mu_{\text{on}}e^{-\left(\mu_{\text{on}}+\mu_{\text{off}}\right)d}}{\mu_{\text{on}}+\mu_{\text{off}}}
\end{split}
\end{align}
We can also calculate $\alpha_{00}$ in a similar fashion. We can write down the rest of the probabilities as
\begin{align}
\begin{split}
\alpha_{00}&=\frac{\mu_{\text{on}}+\mu_{\text{off}}e^{-\left(\mu_{\text{on}}+\mu_{\text{off}}\right)d}}{\mu_{\text{on}}+\mu_{\text{off}}}\\
\alpha_{10}&=\frac{\mu_{\text{on}}-\mu_{\text{on}}e^{-\left(\mu_{\text{on}}+\mu_{\text{off}}\right)d}}{\mu_{\text{on}}+\mu_{\text{off}}}\\
\alpha_{01}&=\frac{\mu_{\text{off}}-\mu_{\text{off}}e^{-\left(\mu_{\text{on}}+\mu_{\text{off}}\right)d}}{\mu_{\text{on}}+\mu_{\text{off}}}
\end{split}
\end{align}

\subsection{State Transition Probabilities}
In this subsection, we derive various state transition probabilities. Set of all possible transitions can be divided into few classes, which are described in the subsequent sub-subsections. Few transitions in each sub-subsection may require special attention. Those are described in details, and the rest can be derived using similar idea. Explicit expressions of all the transition probabilities can be found in the appendix. Source and destination states are represented by $(i,\phi,\psi)$ and $(j,\phi',\psi')$ respectively. Probability of going to state $(j,\phi',\psi')$ from state $(i,\phi,\psi)$ is denoted by $\pi_{(j,\phi',\psi')}^{(i,\phi,\psi)}$. Probability of generation of $k$ new packets during a slot duration is given by $A(d,k)=\frac{e^{-n\lambda d}(n\lambda d)^k}{k!}$.

\subsubsection{Transitions from $(0,\phi,\psi)$ to $(0,\phi',\psi')$}\label{invalid_states}
If the system is devoid of packets at the beginning of a slot, under no circumstances AP will go into service mode. As a result, $(0,0,1)$ and $(0,1,1)$ are invalid states. Now, let us consider the transition from $(0,\phi,0/2)$ to $(0,0,0)$. By $(0,\phi,0/2)$, we mean that the state can be either $(0,\phi,0)$ or $(0,\phi,2)$. In the destination state, PNP is in OFF mode. So AP's idleness may occur due to three cases - due to false alarm [probability $P_F$], idleness decision [probability $(1-P_F)\theta$] or service decision without any remaining packet [probability $(1-P_F)(1-\theta)(1-\xi)$]. We also take into account that PNP moved from state $\phi$ to $0$ and no packet is generated during the slot duration. So, the transition probability can be written as
\begin{align}
\begin{split}
\pi_{(0,0,0)}^{(0,\phi,0/2)}&=\alpha_{\phi 0}\left[P_F+(1-P_F)\theta+ \right. \\
&\left. (1-P_F)(1-\theta)(1-\xi)\right]A(d,0)
\end{split}
\end{align}

\subsubsection{Transitions from $(0,\phi,\psi)$ to $(j,\phi',\psi')~\forall j \in [1,K]$}
Similar to the previous sub-subsection, source state cannot be either $(0,0,1)$ or $(0,1,1)$. Consider the transition from $(0,\phi,0/2)$ to $(j,0,1)$. As the destination state has $j(>0)$ number of packets left, there is a chance of AP going into the service mode, unlike the previous sub-subsection. A service decision by AP implies no false-alarm by the CR sensors (correctly identifying PNP's inactivity), as well as no idleness and charging decision. If the buffer is not full ($j<K$) in the destination state $j$ arrivals happen in the slot duration. $K$ or more arrivals will make the buffer full at the destination state. As PNP goes from mode $\phi$ to OFF interval, we can write the transition probability as
\begin{align}
\begin{split}
\pi_{(j,0,1)}^{(0,\phi,0/2)}&=\alpha_{\phi 0}(1-P_F)(1-\xi)(1-\theta)A(d,j)~\forall j\in [1,K)\\
\pi_{(K,0,1)}^{(0,\phi,0/2)}&=\alpha_{\phi 0}(1-P_F)(1-\xi)(1-\theta)\sum_{k=K}^{\infty}A(d,k)
\end{split}
\end{align}

\subsubsection{Transitions from $(i,\phi,\psi)$ to $(j,\phi',\psi')$, $i>0$,~$\max(i-1,1)\leq j < K-1$}\label{subsubsection_3}
A single slot duration can clear at most one packet from the system. So the number of packets in the destination cannot go down by more than one compared to the source state. $(1,\phi,\psi)$ to $(0,\phi',\psi')$ transition requires special attention, which is done in the next sub-subsection.

Now, let us inspect the transition from $(i,0,1)$ to $(j,0,0)$, where $i>0$ and $j\geq \max(i-1,1)$. It is given that, at the source state AP decided to go into service mode. That service is successfully executed if and only if the OFF interval that started at the source state continues for the whole slot duration, probability of which is given by $F_{\text{off}}(d)$. In this case, $j-i+1$ new packets are required to have $j$ packets in the destination state. The service is unsuccessful if at least one ON interval occurs within the slot duration, which has the probability $\alpha_{00}-F_{\text{off}}(d)$, and only $j-i$ new packets are necessary to have $j$ packets at the destination. At the destination state, AP decides to remain idle, which may happen due to two scenarios - either it has a false-alarm case [probability $P_F$] or it decides to go into idle period despite correct sensing [probability $(1-P_F)\theta$]. Finally, we can write down the transition probability as
\begin{align}
\begin{split}
\pi_{(j,0,0)}^{(i,0,1)}&=\left[P_F+(1-P_F)\theta\right]\left\{F_{\text{off}}(d)A(d,j-i+1) \right.\\
& \left. +(\alpha_{00}-F_{\text{off}}(d))A(d,j-i)\right\}
\end{split}
\end{align}

\subsubsection{Transitions from $(1,\phi,\psi)$ to $(0,\phi',\psi')$}
As we have already mentioned earlier, destination cannot be one of the two invalid states [namely $(0,0,1)$ and $(0,1,1)$]. In addition to that, if PNP is in ON mode at the source state, a packet cannot be cleared successfully from the system. If PNP is in OFF mode at the source and ON at the destination state, successful service is again impossible. So, we can only have the following transitions.
\begin{align}
\begin{split}
\pi_{(0,0,0)}^{(1,0,1)}&=\left[P_F+(1-P_F)\theta+(1-P_F)(1-\theta)(1-\xi)\right]\\
&\left\{F_{\text{off}}(d)A(d,0)\right\}\\
\pi_{(0,0,2)}^{(1,0,1)}&=\left[(1-P_F)(1-\theta)\xi\right]\left\{F_{\text{off}}(d)A(d,0)\right\}
\end{split}
\end{align}

\subsubsection{Transitions from $(i,\phi,\psi)$ to $(K-1,\phi',\psi')$, $i>0$}
Consider the system when it comes from a source state, where service decision was taken. If that service is successful, no more than $K-1$ packets may remain in the system. For example, let us analyse the transition from $(i,0,1)$ to $(K-1,0,1)$, where $i>0$. Similar to \ref{subsubsection_3}, successful service happens with probability $F_{\text{off}}(d)$ and service is interrupted with probability $\alpha_{00}-F_{\text{off}}(d)$. In case of completion of service, $K-i$ or more new packets will take the system-buffer to $K-1$ packets, whereas only $K-i-1$ packets are required in case of unsuccessful service. As PNP is off at the destination state, service decision is taken with probability $(1-P_F)(1-\theta)(1-\xi)$. The transition probability can now be written as
\begin{align}
\begin{split}
\pi_{(K-1,0,1)}^{(i,0,1)}&=(1-P_F)(1-\xi)(1-\theta)\left\{F_{\text{off}}(d)\sum_{k=K-i}^{\infty}A(d,j) \right. \\
& \left. +(\alpha_{00}-F_{\text{off}}(d))A(d,K-i-1)\right\}
\end{split}
\end{align}

\subsubsection{Transitions from $(i,\phi,\psi)$ to $(K,\phi',\psi')$, $i>0$}
Now, we are left with only one special case. To reach a full buffer, no successful service can happen in the time-slot, and $K-i$ or more packets must be generated. For example, consider the case of transition from $(i,0,1)$ to $(K,0,1)$. Probability of unsuccessful service is $\alpha_{00}-F_{\text{off}}(d)$, as shown in \ref{subsubsection_3}. Choosing service mode in destination state happens with probability $(1-P_F)(1-\theta)(1-\xi)$. So, overall transition probability can be written as
\begin{align}
\pi_{(K,0,1)}^{(i,0,1)}\hspace*{-0.3em}=(1-P_F)(1-\xi)(1-\theta)\left(\alpha_{00}-F_{\text{off}}(d)\right)\hspace*{-0.5em}\sum_{k=K-i}^{\infty}\hspace*{-0.4em}a(d,k)
\end{align}

\subsection{Steady State Probability}
We have described all possible state transitions of our system in the previous subsection. Now, we can move on to the steady-state probabilities of the representative sensor. In section \ref{invalid_states}, we have mentioned that a state $(i,\phi,\psi)$ can not be of form $(0,0,1)$ or $(0,1,1)$, as the system never resumes service when there is no packet is remaining in the queue. When there are a non-zero number of packets left in the queue, $\phi$ and $\psi$ can take all possible values. The maximum number of packets that can be accommodated in the queue is $K$. So, the total number of valid states of our system is $6(K+1)-2=6K+4$. Let $\boldsymbol{\mu}$ be the vector representing the steady-state probabilities of being in these states. The transition matrix, denoted by \textbf{P}, can be populated using the method described in the previous subsection. The stationary distribution of the system can be found by solving $\boldsymbol{\mu} \textbf{P} = \boldsymbol{\mu}$.

\section{Performance Analysis of the System}
In the preceding section, we have demonstrated a method to analyze the functionality of an AP along with the associated IoT nodes. In this section, we try to quantify the performance of this system via various QoS metrics, such as - packet drop probability and the average waiting time of the generated data at IoT nodes, and also the probability of interference on PNP's transmission, due to misdetection by the IoT operator. Probability of the system being in the state $(i,\phi,\psi)$ is denoted by $P_{(i,\phi,\psi)}$.

\subsection{Packet Drop Probability}
The packet drop probability (denoted by $P_B$) of a queuing system is intimately related to the carried load of the system. Carried load (denoted by $\rho_c$) of a queuing system is defined as the probability that the server is busy at an arbitrary point in time. For our system, the whole timeline is divided into slots of duration $d$. So, the carried load is equivalent to the probability that a particular slot is successfully used for clearing a packet. For a successful service, two conditions must be satisfied, PNP should be inactive at the beginning of a slot, and PNP also does not resume transmission during that slot. So, carried load can be formally written as $\rho_c = e^{-\mu_{\text{off}}d}\sum_{i=1}^{K}P_{(i,0,1)}$. Utilization factor of the queue is given by $\rho = n\lambda/d$. So, the packet drop probability of the system can be written as
\begin{align}
P_B = 1 - \frac{\rho_c}{\rho} = 1 - \frac{de^{-\mu_{\text{off}}d}}{n\lambda}\sum_{i=1}^{K}P_{(i,0,1)}
\end{align}

\begin{figure*}[h]
\begin{subfigure}{0.33\textwidth}
\includegraphics[width=1\linewidth]{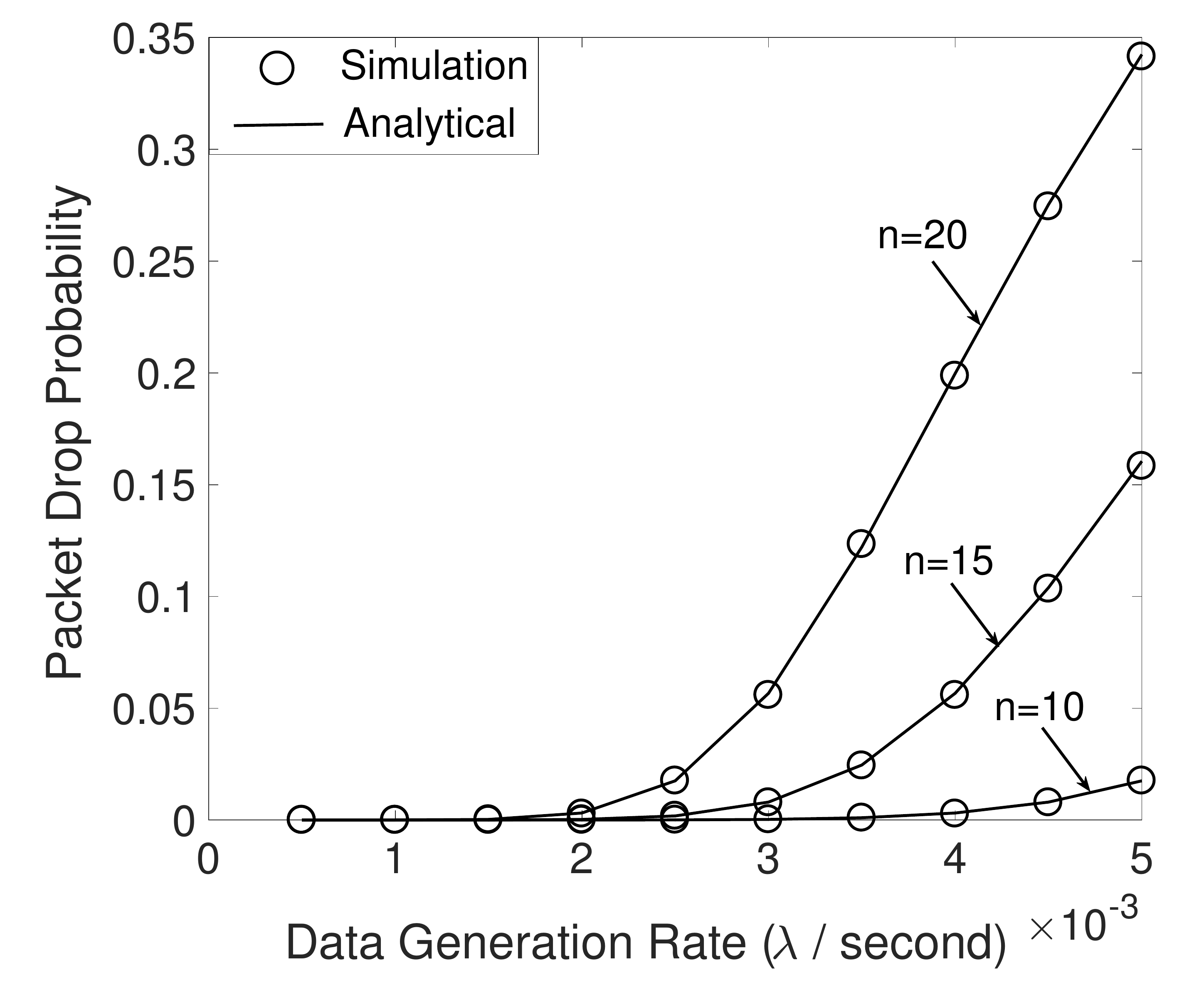}
\caption{Data Rate vs Packet Drop Probability}
\label{subfig_datarate_vs_PB}
\end{subfigure}
\begin{subfigure}{0.33\textwidth}
\includegraphics[width=1\linewidth]{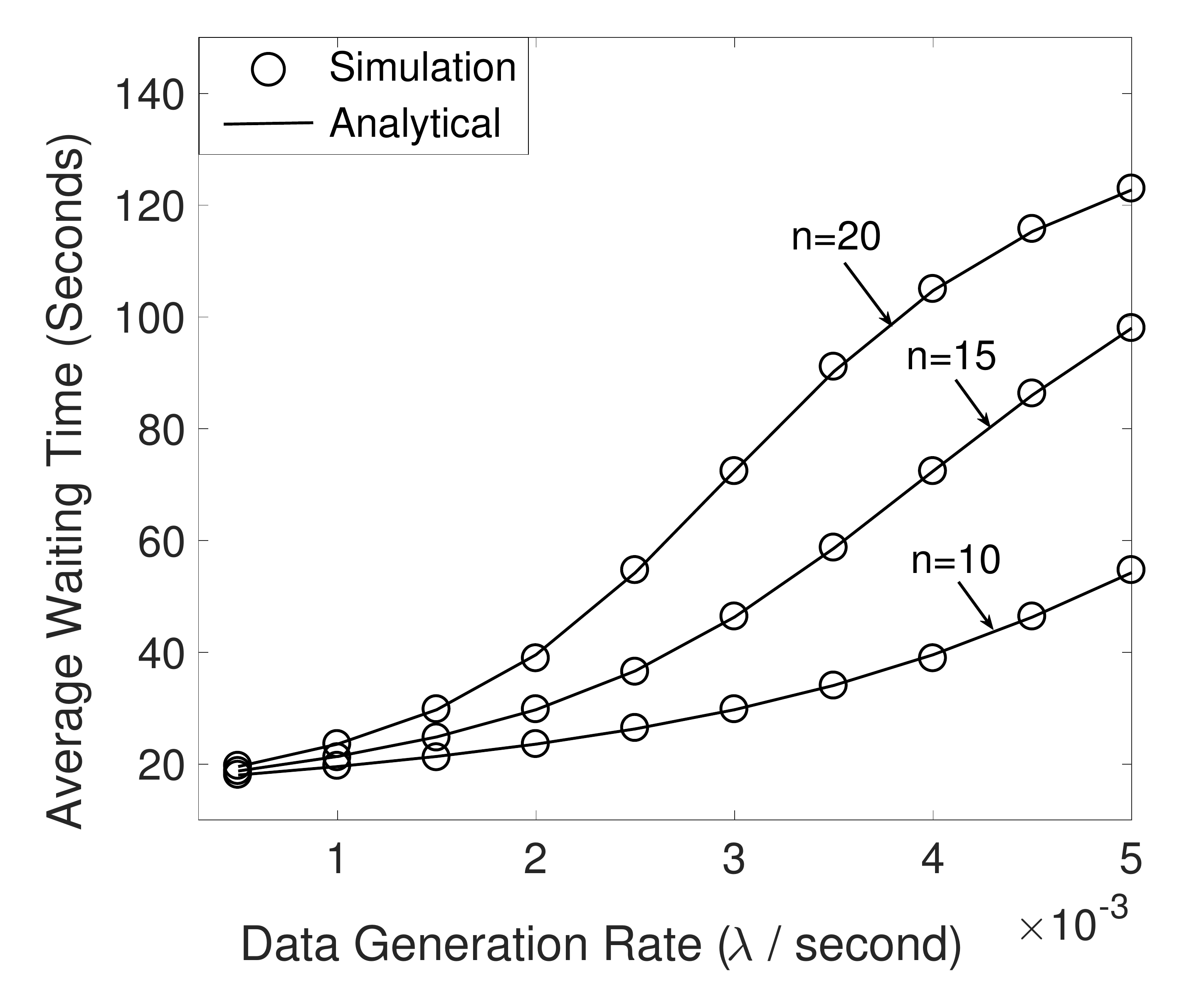}
\caption{Data Rate vs Average waiting time}
\label{subfig_datarate_vs_delay}
\end{subfigure}
\begin{subfigure}{0.33\textwidth}
\includegraphics[width=1\linewidth]{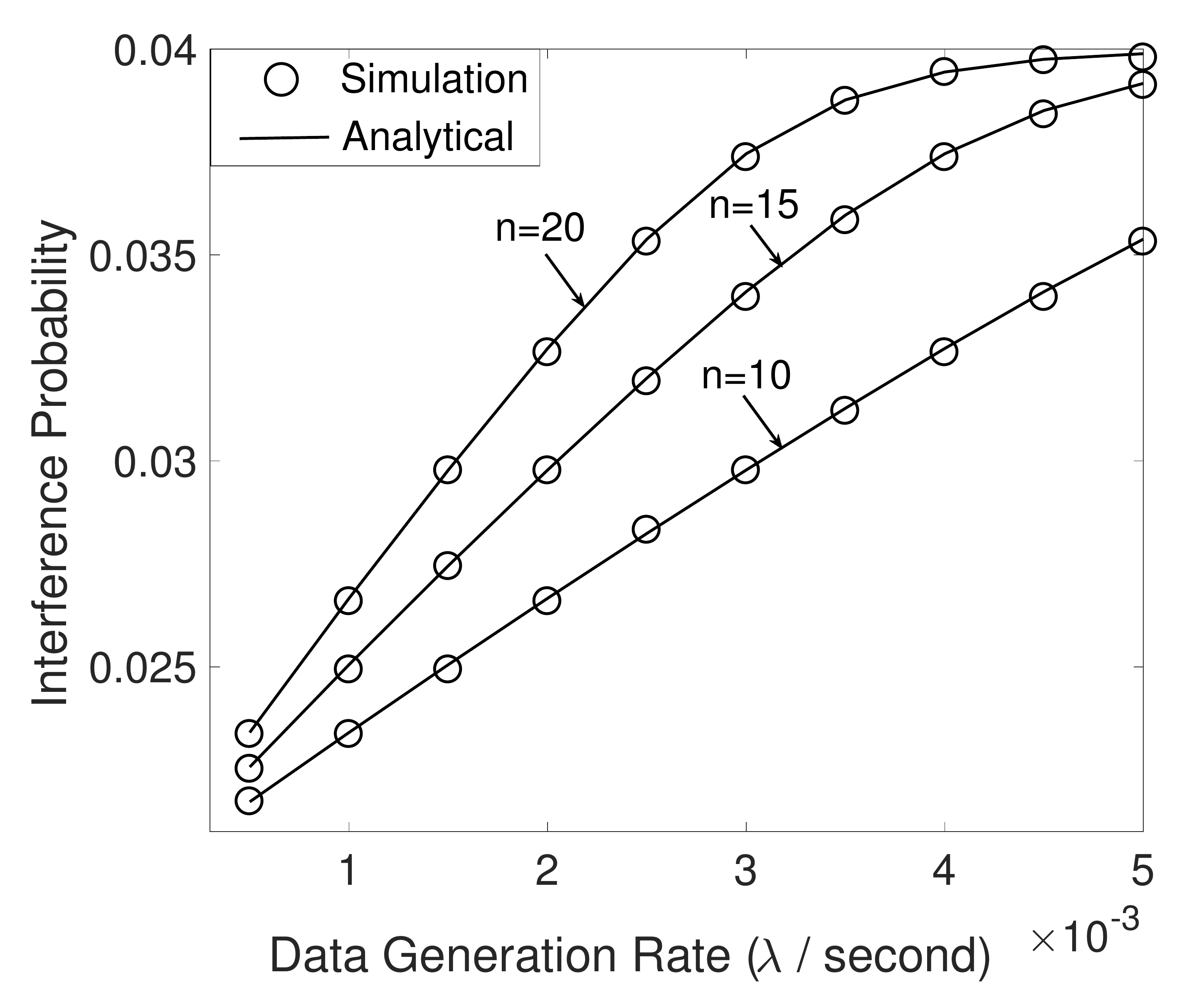}
\caption{Data Rate vs Interference Probability}
\label{subfig_datarate_vs_IP}
\end{subfigure}
\caption{QoS Metrics of the CR-IoT Network w.r.t. Data Generation Rate at the IoT Nodes}	
\end{figure*}

\subsection{Average Waiting Time}
Using Little's law, the average waiting time (denoted by $W$) of a queuing system can be calculated as $W=L/\lambda_{\text{eff}}$. $L$ is the average number of packets in the system. Due to the finiteness of the queue, some packets get dropped when the queue is full. So, the effective packet generation rate at the representative sensor becomes $\lambda_{\text{eff}}=n\lambda (1-P_B)$. Now, we aim to derive the average number of packets ($L$) in the system. Let  $N(t)$ be the number of packets in the queue at time $t$.

When a newly generated packet is successfully included in the queue, the probability that the system already had $i \in [0,K-1]$ packets, is denoted by $\gamma_i$. After a new packet arrival, the probability of finding the system with $i \in [0,K]$ packets, irrespective of whether the new packet can or can not get in, is given by $\epsilon_i$. After successful packet transmission, the probability that the system has $i$ packets left is represented by $\delta_i$. From the well-known PASTA property (Poisson Arrivals See Time Averages), for any queuing process with Poisson arrival, the probability of finding the system having $i$ packets at any arbitrary instant is exactly equal to $\epsilon_i$ \cite{baccelli2006role}. Number of packets in system changes due to either arrival or departure of a packet \cite{kleinrock1975queueing}, so clearly $\gamma_i=\delta_i~\forall ~ i \in [0,K-1]$.
\begin{lemma}\label{thinning_lemma}
$\forall ~ i \in [0,K-1]$, $\epsilon_i=(1-P_B)\gamma_i$ and $\epsilon_K=P_B$.
\end{lemma}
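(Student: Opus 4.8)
The plan is to interpret the three distributions $\gamma_i$, $\epsilon_i$, $\delta_i$ as conditional distributions over a common underlying sequence of arrival epochs, and to exploit the fact that each arrival either ``gets in'' (with the system having at most $K-1$ packets just before it) or is dropped (with the system full just before it). First I would make precise the sampling experiment behind $\epsilon_i$: by PASTA, $\epsilon_i$ is simultaneously the time-stationary probability of $N(t)=i$ and the probability that an arriving packet sees $i$ packets already present, for $i\in[0,K]$. Since an arrival is dropped exactly when it sees $K$ packets, the drop probability satisfies $P_B=\epsilon_K$, which is the second claim. The first claim then follows by a conditioning argument: among the arrivals that are \emph{not} dropped --- a fraction $1-P_B$ of all arrivals --- the distribution of the number of packets seen is, by definition, $\gamma_i$ for $i\in[0,K-1]$; but the number seen by a generic arrival is governed by $\epsilon$, so $\gamma_i=\Pr(\text{see }i\mid\text{see}\le K-1)=\epsilon_i/(1-\epsilon_K)=\epsilon_i/(1-P_B)$, i.e. $\epsilon_i=(1-P_B)\gamma_i$.

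Concretely, the key steps in order are: (i) invoke PASTA to identify $\epsilon_i$ with the arrival-averaged occupancy distribution seen just prior to an arrival; (ii) observe that ``the arriving packet is admitted'' is precisely the event ``$i\le K-1$ just before arrival,'' so $\sum_{i=0}^{K-1}\epsilon_i=1-P_B$ and $\epsilon_K=P_B$; (iii) restrict to the admitted arrivals, note this is exactly the event defining $\gamma_i$, and apply the definition of conditional probability to get $\gamma_i=\epsilon_i/(1-P_B)$ for $i\in[0,K-1]$; (iv) rearrange to $\epsilon_i=(1-P_B)\gamma_i$. The auxiliary remark $\gamma_i=\delta_i$ (level crossing) is not needed for this lemma but is consistent with the picture.

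The main obstacle is step (i)--(ii): justifying rigorously that the \emph{event} ``an arrival is blocked'' coincides with the \emph{event} ``the PASTA-sampled pre-arrival state equals $K$,'' given that in this model arrivals are Poisson but service completions are tied to the slotted, PNP-gated structure rather than to an ordinary $M/G/1/K$ queue. One must be careful that ``the system is full at the arrival instant'' is well-defined --- the buffer content is a left-continuous (or suitably right-continuous) jump process, and PASTA applies to the state just before the Poisson arrival --- and that no arrival sees $i>K$. Once this identification is pinned down, the remaining algebra is a one-line normalization of a conditional probability, and the statement $\epsilon_K=P_B$ is immediate. I would therefore spend most of the write-up carefully setting up the arrival-instant sampling and the admitted/blocked dichotomy, and dispatch the rest in two lines.
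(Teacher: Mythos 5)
Your proposal is correct and follows essentially the same route as the paper: identify $\epsilon_K$ with $P_B$ via the definition of blocking, then obtain $\gamma_i=\epsilon_i/(1-P_B)$ by conditioning the PASTA-sampled pre-arrival distribution on the admission event $N(t)<K$, exactly as in the paper's limit-of-conditional-probabilities computation. Your added caution about justifying PASTA in the slotted, PNP-gated setting is a reasonable refinement of rigor but not a different argument.
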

\begin{proof}
$\forall ~ i \in [0,K-1]$,
\begin{align}
\begin{split}
\gamma_i&=\lim_{t \to \infty}\lim_{\Delta t\to 0}P[N(t)=k|N(t)\hspace*{-0.2em}<\hspace*{-0.2em}K,\text{Arrival in }(t,t+\Delta t)]\\
&=\lim_{t \to \infty}\lim_{\Delta t\to 0}\frac{P[N(t)=k,N(t)\hspace*{-0.2em}<\hspace*{-0.2em}K,\text{Arrival in }(t,t+\Delta t)]}{P[N(t)<K,\text{Arrival in }(t,t+\Delta t)]}\\
&=\lim_{t \to \infty}\lim_{\Delta t\to 0}\frac{P[N(t)=k|\text{Arrival in }(t,t+\Delta t)]}{P[N(t)<K|\text{Arrival in }(t,t+\Delta t)]}\\
&=\frac{\epsilon_i}{1-P_B}
\end{split}
\end{align}
If a new packet is generated while the buffer of the queue is already full, it needs to be dropped. This is exactly the definition of the packet drop probability, so $\epsilon_K=P_B$. This concludes the proof.
\end{proof}
Now the final step for evaluating average waiting time is to calculate $\delta_i \forall ~ i \in [0,K-1]$, which in turn will help to find out $\gamma_i$s and $\epsilon_i$s. In order to have $i$ packets after a successful service, the system must have at least $1$ packet (otherwise service decision has no meaning) and at most $i+1$ packets (only one packet is cleared during a slot-duration) at the beginning of the slot. Successful service also requires PNP to be inactive at the beginning of the slot and also during the slot-duration. Given all these conditions are satisfied, probability of finding the system with $i$ packets after a successful transmission is denoted by $\kappa_i$. We need to consider the states where PNP is inactive ($\phi=0$), and AP decides to clear packet ($\psi=1$). So, $\forall ~ i \in [0,K-1]$, $\kappa_i= e^{-\mu_{\text{off}}d}\sum_{j=1}^{i+1}P_{(j,0,1)}$. We can normalize these probabilities to find out $\delta_i$s. So, $\delta_i=\kappa_i/\sum_{i=0}^{K-1}\kappa_i ~\forall ~ i \in [0,K-1]$. Final expression of the average waiting time is
\begin{align}
W=\frac{\sum_{i=0}^{K}\epsilon_i}{n\lambda (1-P_B)}=\frac{P_B}{n\lambda (1-P_B)}+\frac{1}{n\lambda}\sum_{i=0}^{K-1}\delta_i
\end{align}

\subsection{Interference Probability}
Due to misdetection by the IoT operator, APs will interfere with PNP's transmission, either by charging IoT nodes ($\psi=2$) or by receiving sensed data from them ($\psi=1$). The probability of such interference (denoted by $P_I$) can be evaluated as
\begin{align}
P_I = P_{(0,1,2)}+\sum_{i=1}^{K}\left(P_{(i,1,1)}+P_{(i,1,2)}\right)
\end{align}

\subsection{Optimal Power Level for Microwave Power Transfer}
In this subsection, we present a method to find out the total transmission power of the APs. Consider one of the APs under the IoT operator. The number of IoT sensors associated with it is denoted by $n$. The $i^{th}$ sensor is at a distance $r_i$ from the its PB. The minimum power level for successfully charging an IoT node is $P_c$. The maximum transmission power that can be provided by the AP is $P_m$. Now, examine a time duration $\tau$ after the CR-IoT network formation. The total number of packets generated at an IoT node during this time is given by $\lambda (1-P_B)\tau$. Time available to AP for charging its sensor (when neither transferring data nor being idle) is $(1-\beta)(1-\theta)\xi\tau$.  Clearing a single packet on average requires $m$ unit of energy. To balance demand of energy, power level required at the $i^{th}$ sensor can be calculated as $P_{\text{th}}^i=\max\left(P_c,\frac{m\lambda (1-P_B)}{(1-\beta)(1-\theta)\xi}\right)$. So, the total power that must be provided by the PB is given by $P_{\text{th}}=\min \left(P_m,\sum_{i \in \text{PB}} P_{\text{th}}^i r_i^{\alpha}\right)$, where $\alpha$ is the pathloss coefficient. The network will be unstable if the required power exceeds the maximum power level ($P_m$) that can be provided by an AP.

\begin{figure*}[ht]
	\centering
	\begin{subfigure}{0.24\textwidth}
		\includegraphics[width=\linewidth]{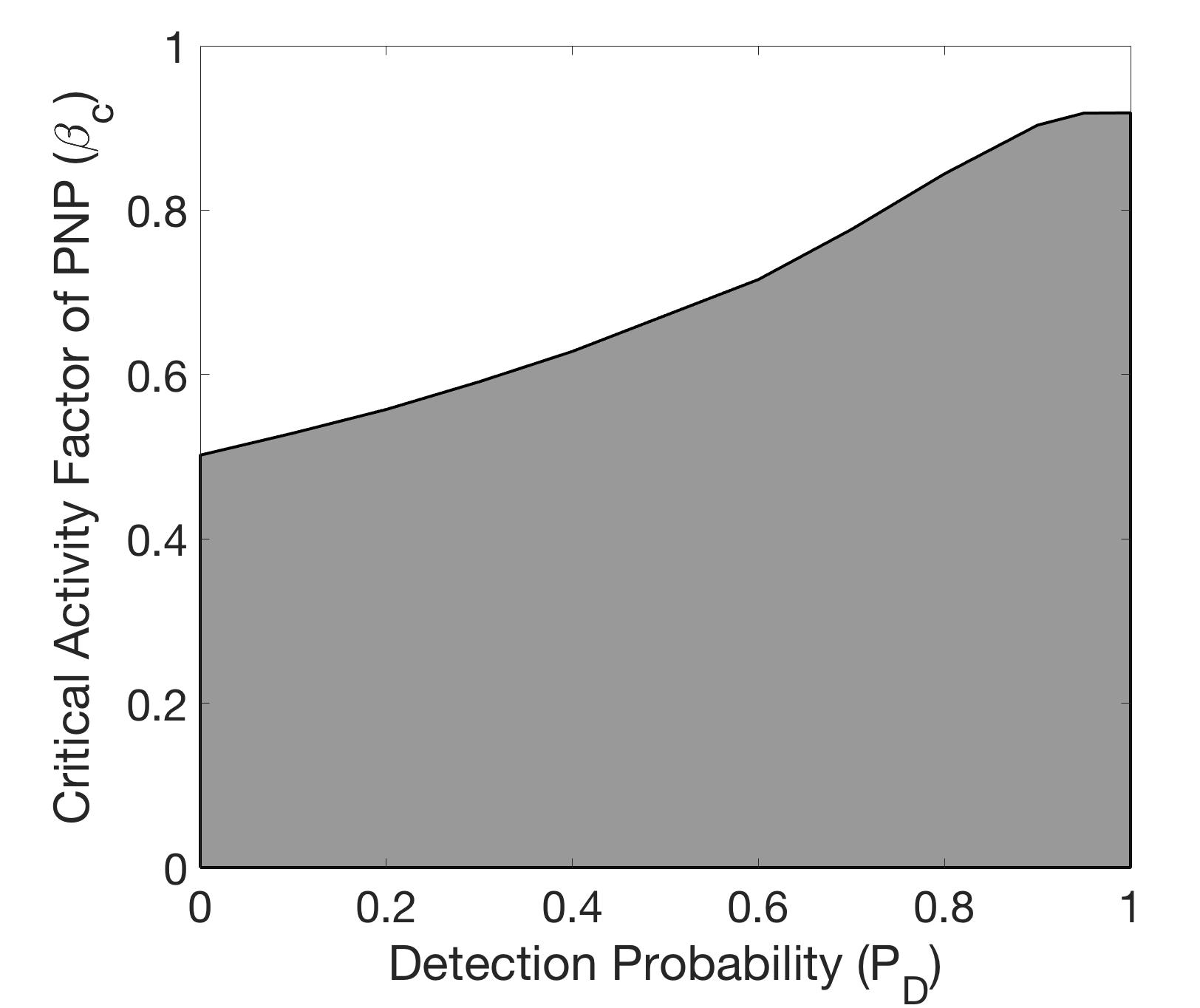}
		\caption{$P_D$ vs $\beta_c$}
		\label{subfig_PD_vs_beta}
	\end{subfigure}
	\begin{subfigure}{0.24\textwidth}
		\includegraphics[width=\linewidth]{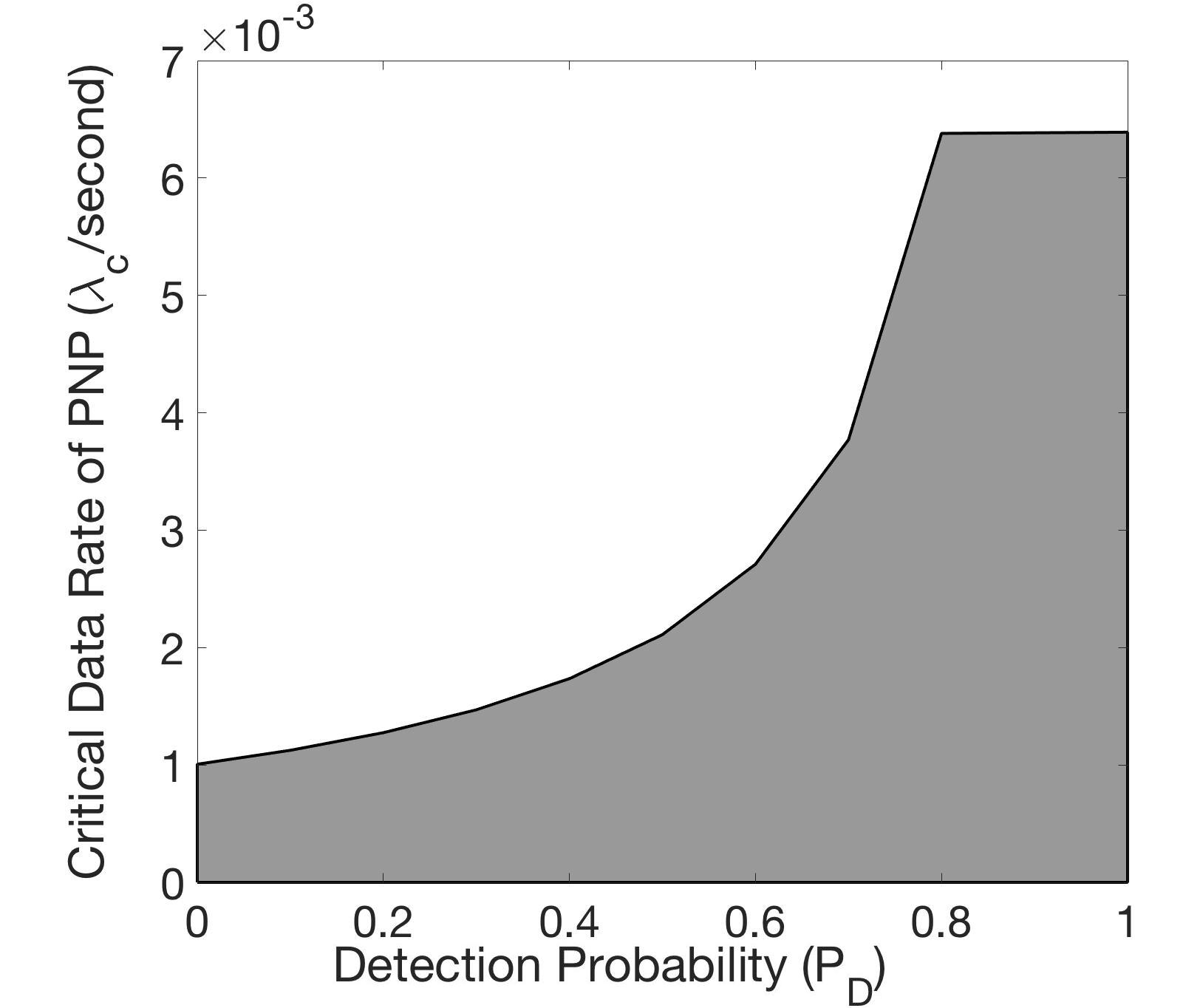}
		\caption{$P_D$ vs $\lambda_c$}
		\label{subfig_PD_vs_data}
	\end{subfigure}
	\centering
	\begin{subfigure}{0.24\textwidth}
		\includegraphics[width=\linewidth]{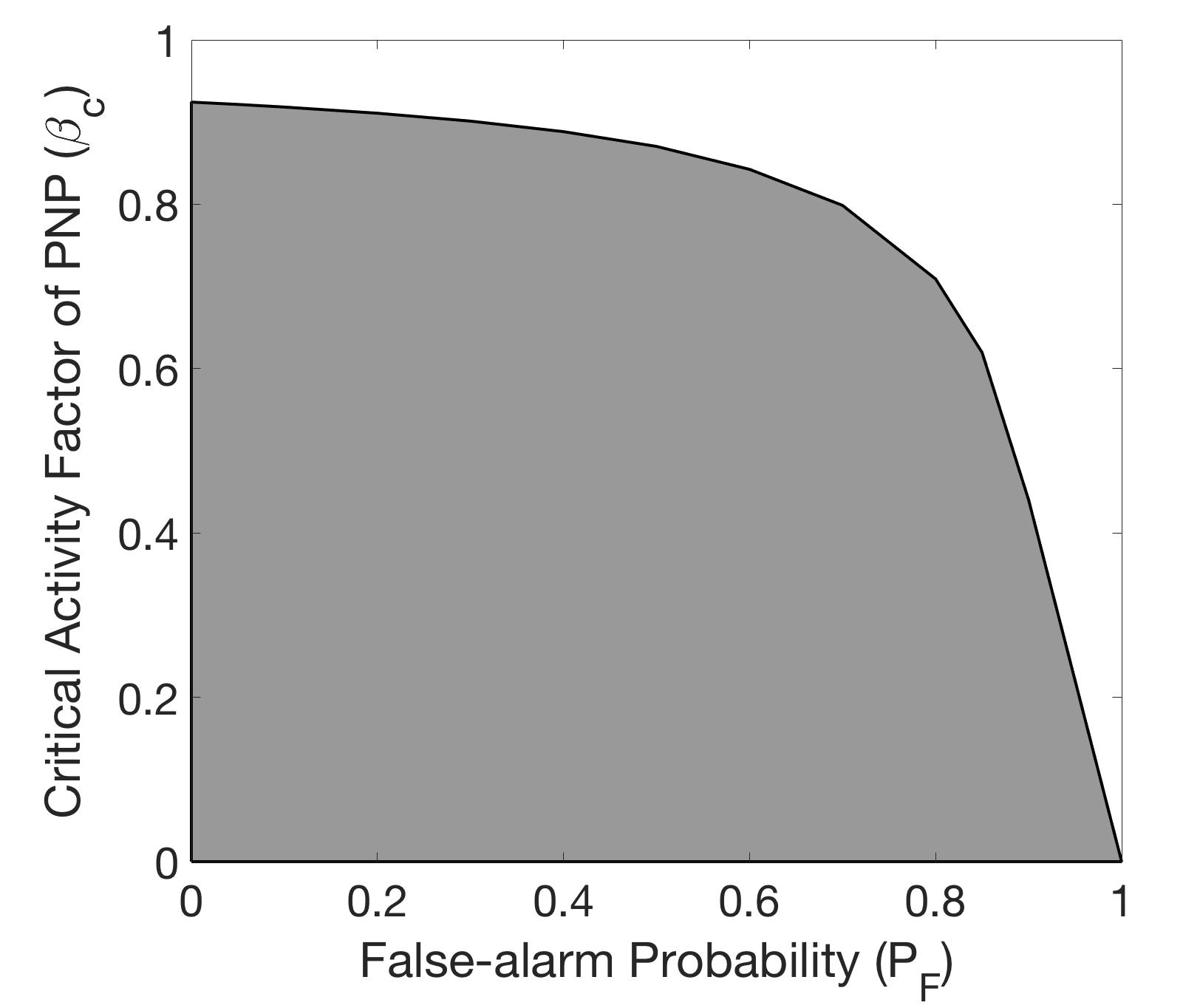}
		\caption{$P_F$ vs $\beta_c$}
		\label{subfig_PF_vs_beta}
	\end{subfigure}
	\begin{subfigure}{0.24\textwidth}
		\includegraphics[width=\linewidth]{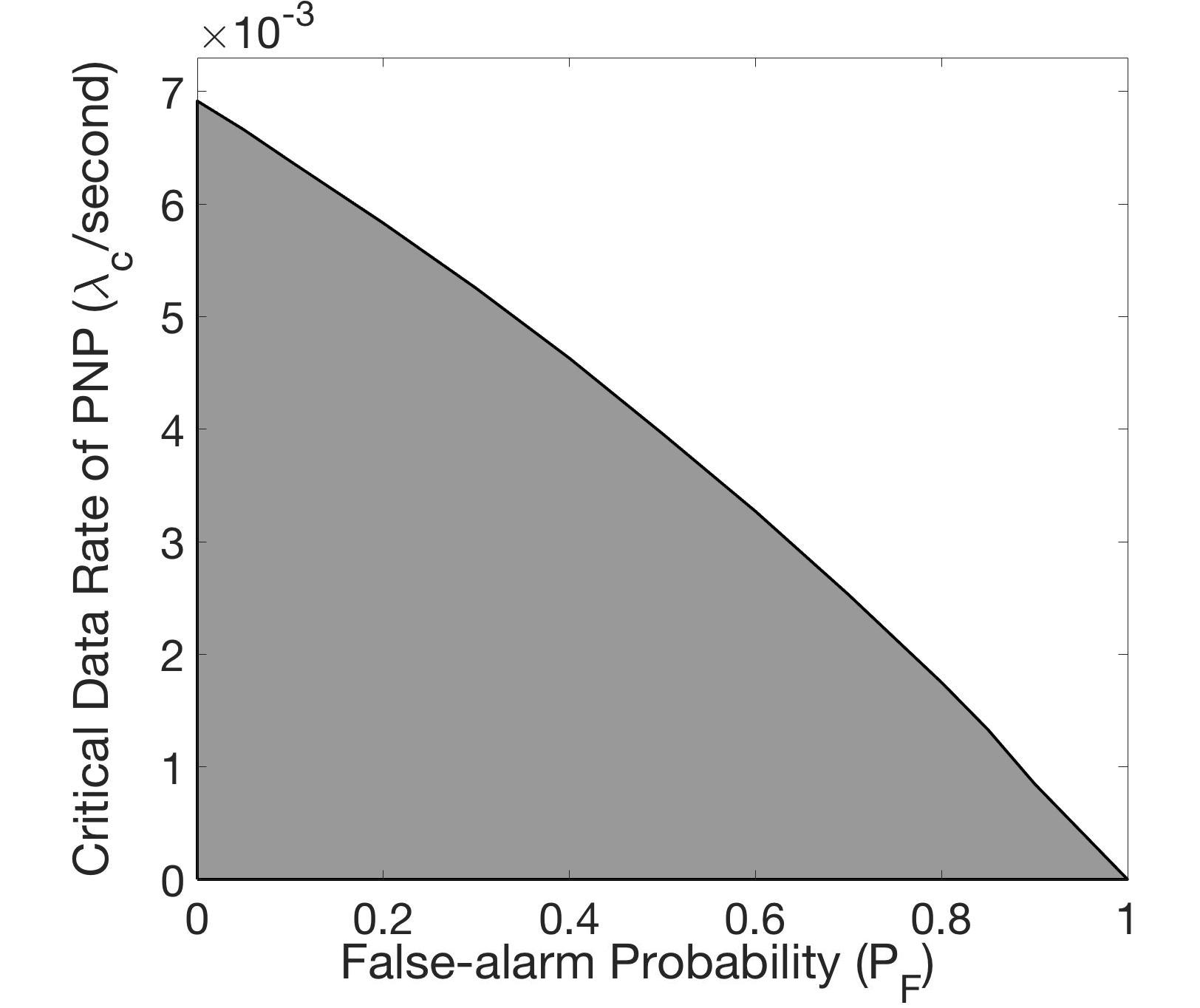}
		\caption{$P_F$ vs $\lambda_c$}
		\label{subfig_PF_vs_data}
	\end{subfigure}
	\caption{Sustainability region of the CR-IoT}
\end{figure*}

\section{Simulation results}\label{section_v}
In this section, we investigate how different system parameters affect the QoS of the CR-IoT operator, and also the chance of interfering with PNP's transmission, due to misdetection by the IoT operator. We also find out the sustainability region of this CR-IoT system. Before moving on to the simulation results, we shall briefly describe the simulation setup. The region of operation of the CR-IoT system is a $10$ km $\times$ $10$ km square area. PNP's base-station is placed at its center and its users are uniformly distributed over the area. There are $10$ APs under the IoT operator, each with a charging radius of $1$ km. The maximum transmission power ($P_m$) of each AP is $10$ W, and the minimum threshold power that must be retained at each IoT node for successful charging is $50$ $\mu$W \cite{janhunen2019wireless}. For a particular AP, the number of associated IoT nodes follow Poisson Distribution and they are uniformly distributed within the AP's charging boundary. The buffer size ($K$) of each IoT node is $10$. Packet transmission time distribution (service time) of each IoT sensor follows an exponential distribution with parameter $1$ packet/second. The average energy required for transmitting one packet is $400$ $\mu$Joule \cite{janhunen2019wireless}.

As in \cite{wang2010delay}, PNP's ON and OFF intervals follow exponential distributions, namely $f_1(t)=\mu_{\text{on}} e^{-\mu_{\text{on}} t}$ and $f_0(t)=\mu_{\text{off}} e^{-\mu_{\text{off}} t}$. Duration of each time slot, at the beginning of which the IoT operator makes the decision about PNP's activity, is taken as $1$ second. The simulation is performed using OMNeT++ software, and the analytical comparison is done in Python. In the following subsections, we show that results obtained from our theoretical framework match very closely with the real setup.

\subsection{Variation of QoS with data generation rate at the IoT nodes}
In this subsection, we inspect few key performance metrics of the CR-IoT network, namely the packet drop probability ($P_B$), the average waiting time ($W$) of the data generated at the IoT nodes, and the interference probability ($P_I$) on PNP's transmission, with respect to the data generation rate. For this simulation, charging probability ($\xi$) and idleness probability ($\theta$) are taken as $0.5$ and $0.2$ respectively. Also, the packet drop and the false-alarm probability of the IoT operator are assumed to be $0.9$ and $0.1$ respectively. PNP's activity factor ($\beta$), which has been defined in section \ref{section_iib}, is set to be $0.5$. With an increase in packet generation rate, the buffer at the IoT nodes is likely to be filled up quickly. As a result, a new data packet will more frequently find the buffer to be full, thus increasing the overall packet drop probability of the network [Fig. \ref{subfig_datarate_vs_PB}]. As the idleness and charging probability remain fixed, data transmission opportunities remain the same. So, increased data generation rate will culminate in a longer waiting time for a packet [Fig. \ref{subfig_datarate_vs_delay}]. With an increase in data generation rate, the IoT operator has to increase both data transfer and charging frequency, which in turn enhances the chance of interfering with PNP's activity [Fig. \ref{subfig_datarate_vs_IP}].

\subsection{Sustainability region of the CR-IoT network with variation of detection probability ($P_D$)}
In this subsection, we show the sustainability region of the CR-IoT system with respect to the detection probability ($P_D$) of the IoT operator. Proper functionality of all the APs are required for the successful operation of the IoT operator. We consider a particular AP under the IoT operator. The number of sensors associated with this AP is $20$. To sustain the CR-IoT system, the IoT operator must maintain its QoS requirements and also the interference limit imposed by the license owner (PNP). For the simulation setup, both the packet drop probability ($P_B$) and interference probability ($P_I$) requirements are set at $0.1$. The false-alarm probability of the IoT operator is kept at $0.1$. 

In Fig. \ref{subfig_PD_vs_beta}, we show the effect of PNP's activity ($\beta$) on the viability of our system. For that purpose, the data generation rate at each IoT node is assumed to be $0.001$ packet/second. It is clear that, with the improvement in detection performance by the IoT operator, it can withstand higher PNP activity without violating the QoS and interference constraints. For a given detection probability (even the perfect detection), if PNP's activity exceeds a certain critical level ($\beta_c$), it is impossible to sustain the network due to a lack of enough correctly detected white-space. An increase in PNP's activity means less opportunity for IoT operator's activity. So, a higher level of transmission power is needed to quickly recharge the IoT nodes. Beyond the $\beta_c$ value, it is not possible to sustain the network without surpassing the maximum transmission power level of the AP.

Next, we keep the activity factor of PNP fixed at $0.5$, and obtain the maximum possible data generation rate of the IoT nodes that is possible under the given constraints. As in the earlier case, improved detection allows increased data generation rate of the IoT nodes [Fig. \ref{subfig_PD_vs_data}]. For a particular detection probability, the system is infeasible beyond a critical data rate ($\lambda_c$ packet/second) due to the physical limitation of the maximum transmission power of the AP.

\subsection{Sustainability region of the CR-IoT network with variation of false-alarm probability ($P_F$)}
In this subsection, false-alarm probability ($P_F$) of the IoT operator is varied while its detection probability ($P_D$) is kept at $0.9$. Same as in the previous subsection, both $P_B$ and $P_I$ requirement are $0.1$.

Firstly, we keep the data generation rate of each IoT node at $0.001$ packet/second. As the false-alarm probability increases, chances of leaving out white-spaces increase, so PNP's activity needs to go down for maintaining the CR-IoT network [Fig. \ref{subfig_PF_vs_beta}]. For a particular value of $P_F$, PNP's activity must not exceed a critical level ($\beta_c$), otherwise, the IoT operator cannot remain operational without increasing its APs' maximum transmission power capability. If $P_F$ of the IoT operator becomes $1$, no amount of transmission power of the AP can keep the system afloat even at zero activity of the PNP.

Finally, we inspect the effect of the data generation rate of the IoT node on our system's functionalities. PNP's activity is assumed to be $0.5$. An increase in the $P_F$ value will allow a lesser data rate of the IoT nodes, as the IoT operator misses out on white-space because of its degrading false-alarm probability [Fig. \ref{subfig_PF_vs_data}]. If the sensing performance of the IoT operator and other system parameters are known, we can calculate the maximum data rate of the IoT nodes ($\lambda_c$ packet/second) that can sustain the CR-IoT network.

\subsection{Effect of synchronicity assumption of PNP and IoT operator's activities}
\begin{figure}[h]
\centering
\includegraphics[width=0.8\linewidth]{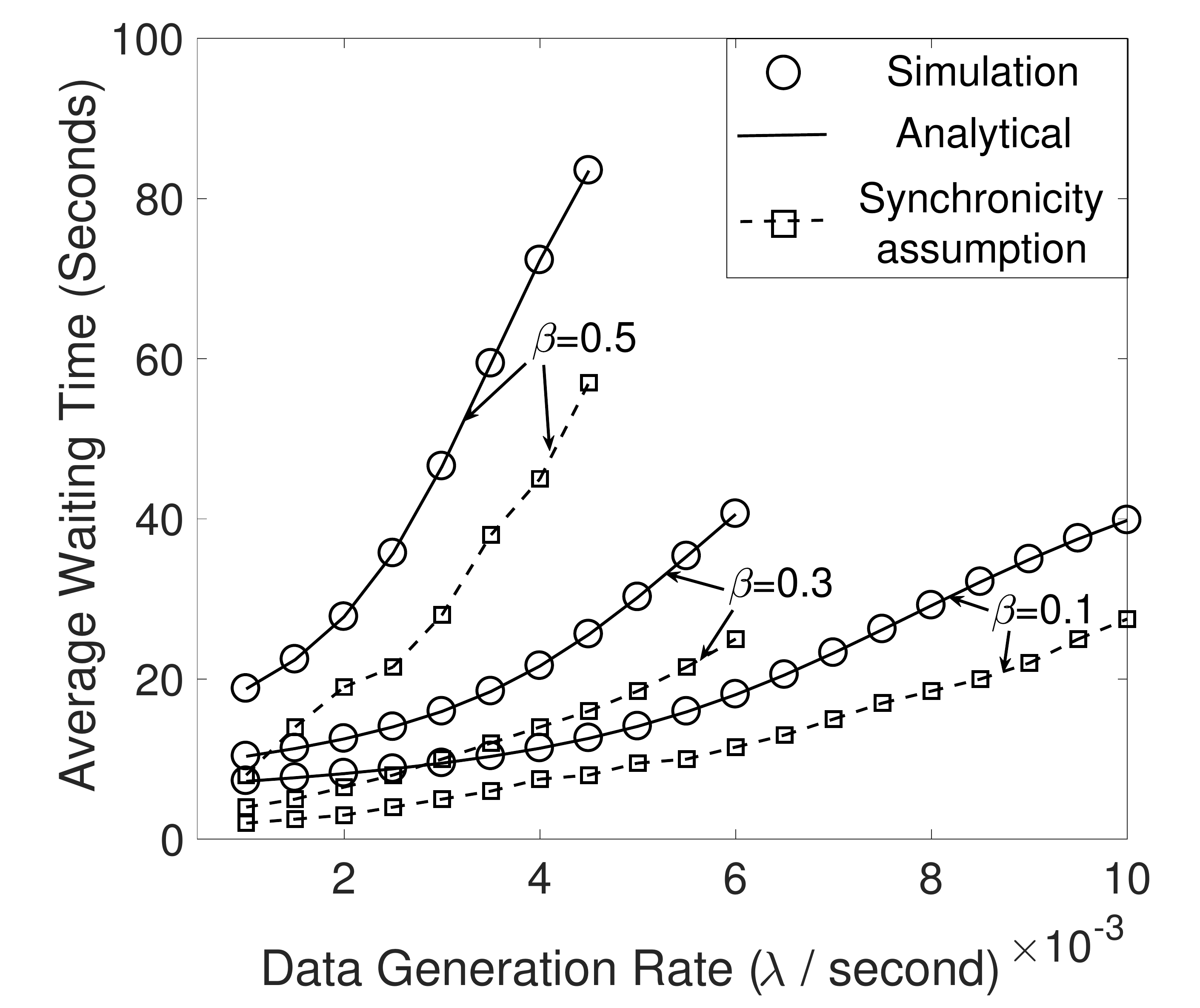}
\caption{Effect of synchronicity assumption}
\label{fig_comparison}
\end{figure}
In \cite{wang2012characterization}, the authors assumed that the PNP's ON and OFF durations are much larger than the duration of a single time-slot of the IoT operator. So they neglected the possibility of collisions between PNP and IoT operator's activity. In other words, the IoT operator is expected to know PNP's activity perfectly, making their activities synchronized. In a practical scenario, occasional overlaps between PNP and IoT operator's activities are bound to happen. In our work, we remove the synchronicity assumption. For the simulation purpose, the AP is assumed to be connected to $20$ IoT sensors. For avoiding collision due to misdetection, the IoT operator's detection performance has to be completely flawless. So, the value of $P_D$ is set to be $1$. To obtain best QoS performance, the false-alarm probability ($P_F$) is set at $0$. In Fig. \ref{fig_comparison}, we plot the average waiting time of the data generated at the IoT nodes with respect to the data generation rate for three cases - the OMNeT++ simulation, our analysis and the analysis done at \cite{wang2012characterization}. The results obtained by the authors of \cite{wang2012characterization} are over-optimistic and deviate significantly from the simulation results of a CR-IoT system [Fig. \ref{fig_comparison}]. Results obtained from our model agree accurately with the OMNeT++ simulation of a real CR-IoT system [Fig. \ref{fig_comparison}].

\section{Conclusions and Future Works}
In this work, we analyze a CR-IoT network operating in opportunistic mode, that enables various smart city facilities. Due to the energy efficiency of the MPT, the IoT operator employs this method for energy harvesting. We propose an analytical method that allows the IoT operator to maintain the balance between different actions (data transfer from IoT sensors, recharging, and staying idle) while satisfying various QoS requirements and complying with the interference limit enforced by the license owner (PNP). As shown in section \ref{section_v}, our theoretical results match very well with the actual simulation of the CR-IoT system. Our model can find the sustainability region of such a CR-IoT system if all the system parameters are known. Availability of multiple licensed band will introduce an additional action of the IoT operator namely choosing suitable spectrum for its own activities. This can be a possible direction for future work.

\appendix

\section*{Transitions from $(0,\phi,\psi)$ to $(0,\phi',\psi')$}
\begin{align}
\begin{split}
\pi_{(0,0,0)}^{(0,\phi,0/2)}&=\alpha_{\phi 0}\left[P_F+(1-P_F)\theta+ \right. \\
&\left. (1-P_F)(1-\theta)(1-\xi)\right]A(d,0)\\
\pi_{(0,0,2)}^{(0,\phi,0/2)}&=\alpha_{\phi 0}(1-P_F)\xi (1-\theta)A(d,0)\\
\pi_{(0,1,0)}^{(0,\phi,0/2)}&=\alpha_{\phi 1}\left[P_D+(1-P_D)\theta+ \right. \\
&\left. (1-P_D)(1-\theta)(1-\xi)\right]A(d,0)\\
\pi_{(0,1,2)}^{(0,\phi,0/2)}&=\alpha_{\phi 1}(1-P_D)\xi(1-\theta)A(d,0)
\end{split}
\end{align}

\section*{Transitions from $(0,\phi,\psi)$ to $(j,\phi',\psi')~\forall j \in [1,K-1]$}
\begin{align}
\begin{split}
\pi_{(j,0,0)}^{(0,\phi,0/2)}&=\alpha_{\phi 0}\left[P_F+(1-P_F)\theta\right]A(d,j)\\
\pi_{(j,0,1)}^{(0,\phi,0/2)}&=\alpha_{\phi 0}(1-P_F)(1-\xi)(1-\theta)A(d,j)\\
\pi_{(j,0,2)}^{(0,\phi,0/2)}&=\alpha_{\phi 0}(1-P_F)\xi (1-\theta)A(d,j)\\
\pi_{(j,1,0)}^{(0,\phi,0/2)}&=\alpha_{\phi 1}\left[P_D+(1-P_D)\theta\right]A(d,j)\\
\pi_{(j,1,1)}^{(0,\phi,0/2)}&=\alpha_{\phi 1}(1-P_D)(1-\xi)(1-\theta)A(d,j)\\
\pi_{(j,1,2)}^{(0,\phi,0/2)}&=\alpha_{\phi 1}(1-P_D)\xi(1-\theta)A(d,j)
\end{split}
\end{align}

\section*{Transitions from $(0,\phi,\psi)$ to $(K,\phi',\psi')$}
\begin{align}
\begin{split}
\pi_{(K,0,0)}^{(0,\phi,0/2)}&=\alpha_{\phi 0}\left[P_F+(1-P_F)\theta\right] \sum_{k=K}^{\infty}A(d,k)\\
\pi_{(K,0,1)}^{(0,\phi,0/2)}&=\alpha_{\phi 0}(1-P_F)(1-\xi)(1-\theta)\sum_{k=K}^{\infty}A(d,k)\\
\pi_{(K,0,2)}^{(0,\phi,0/2)}&=\alpha_{\phi 0}(1-P_F)\xi (1-\theta)\sum_{k=K}^{\infty}A(d,k)\\
\pi_{(K,1,0)}^{(0,\phi,0/2)}&=\alpha_{\phi 1}\left[P_D+(1-P_D)\theta\right]\sum_{k=K}^{\infty}A(d,k)\\
\pi_{(K,1,1)}^{(0,\phi,0/2)}&=\alpha_{\phi 1}(1-P_D)(1-\xi)(1-\theta)\sum_{k=K}^{\infty}A(d,k)\\
\pi_{(K,1,2)}^{(0,\phi,0/2)}&=\alpha_{\phi 1}(1-P_D)\xi(1-\theta)\sum_{k=K}^{\infty}A(d,k)
\end{split}
\end{align}

\section*{Transitions from $(i,\phi,\psi)$ to $(j,\phi',\psi')$, $i>0$,~$\max(i-1,1)\leq j < K-1$}
\begin{align}
\begin{split}
\pi_{(j,0,0)}^{(i,\phi,0/2)}&=\alpha_{\phi 0}\left[P_F+(1-P_F)\theta\right] A(d,j-i)\\
\pi_{(j,0,1)}^{(i,\phi,0/2)}&=\alpha_{\phi 0}(1-P_F)(1-\xi)(1-\theta)A(d,j-i)\\
\pi_{(j,0,2)}^{(i,\phi,0/2)}&=\alpha_{\phi 0}(1-P_F)\xi (1-\theta)A(d,j-i)\\
\pi_{(j,1,0)}^{(i,\phi,0/2)}&=\alpha_{\phi 1}\left[P_D+(1-P_D)\theta\right]A(d,j-i)\\
\pi_{(j,1,1)}^{(i,\phi,0/2)}&=\alpha_{\phi 1}(1-P_D)(1-\xi)(1-\theta)A(d,j-i)\\
\pi_{(j,1,2)}^{(i,\phi,0/2)}&=\alpha_{\phi 1}(1-P_D)\xi(1-\theta)A(d,j-i)
\end{split}
\end{align}

\begin{align}
\begin{split}
\pi_{(j,1,0)}^{(i,\phi,1)}&=\alpha_{\phi 1}\left[P_D+(1-P_D)\theta\right]A(d,j-i)\\
\pi_{(j,1,1)}^{(i,\phi,1)}&=\alpha_{\phi 1}(1-P_D)(1-\xi)(1-\theta)A(d,j-i)\\
\pi_{(j,1,2)}^{(i,\phi,1)}&=\alpha_{\phi 1}(1-P_D)\xi(1-\theta)A(d,j-i)
\end{split}
\end{align}

\begin{align}
\begin{split}
\pi_{(j,0,0)}^{(i,0,1)}&=\left[P_F+(1-P_F)\theta\right]\left\{F_{\text{off}}(d)A(d,j-i+1) \right.\\
& \left. +(\alpha_{00}-F_{\text{off}}(d))A(d,j-i)\right\}\\
\pi_{(j,0,1)}^{(i,0,1)}&=(1-P_F)(1-\xi)(1-\theta)\left\{F_{\text{off}}(d)A(d,j-i+1) \right.\\
& \left. +(\alpha_{00}-F_{\text{off}}(d))A(d,j-i)\right\}\\
\pi_{(j,0,2)}^{(i,0,1)}&=(1-P_F)\xi (1-\theta)\left\{F_{\text{off}}(d)A(d,j-i+1) \right.\\
& \left. +(\alpha_{00}-F_{\text{off}}(d))A(d,j-i)\right\}
\end{split}
\end{align}

\begin{align}
\begin{split}
\pi_{(j,0,0)}^{(i,1,1)}&=\alpha_{10}\left[P_F+(1-P_F)\theta\right]A(d,j-i)\\
\pi_{(j,0,1)}^{(i,1,1)}&=\alpha_{10}(1-P_F)(1-\xi)(1-\theta)A(d,j-i)\\
\pi_{(j,0,2)}^{(i,1,1)}&=\alpha_{10}(1-P_F)\xi (1-\theta)A(d,j-i)
\end{split}
\end{align}

\section*{Transitions from $(1,\phi,\psi)$ to $(0,\phi',\psi')$}
\begin{align}
\begin{split}
\pi_{(0,0,0)}^{(1,0,1)}&=\left[P_F+(1-P_F)\theta+(1-P_F)(1-\theta)(1-\xi)\right]\\
&\left\{F_{\text{off}}(d)A(d,0)\right\}\\
\pi_{(0,0,2)}^{(1,0,1)}&=\left[(1-P_F)(1-\theta)\xi\right]\left\{F_{\text{off}}(d)A(d,0)\right\}
\end{split}
\end{align}

\section*{Transitions from $(i,\phi,\psi)$ to $(K-1,\phi',\psi')$, $i>0$}
\begin{align}
\begin{split}
\pi_{(K-1,0,0)}^{(i,\phi,0/2)}&=\alpha_{\phi 0}\left[P_F+(1-P_F)\theta\right] A(d,K-i-1)\\
\pi_{(K-1,0,1)}^{(i,\phi,0/2)}&=\alpha_{\phi 0}(1-P_F)(1-\xi)(1-\theta)A(d,K-i-1)\\
\pi_{(K-1,0,2)}^{(i,\phi,0/2)}&=\alpha_{\phi 0}(1-P_F)\xi (1-\theta)A(d,K-i-1)\\
\pi_{(K-1,1,0)}^{(i,\phi,0/2)}&=\alpha_{\phi 1}\left[P_D+(1-P_D)\theta\right]A(d,K-i-1)\\
\pi_{(K-1,1,1)}^{(i,\phi,0/2)}&=\alpha_{\phi 1}(1-P_D)(1-\xi)(1-\theta)A(d,K-i-1)\\
\pi_{(K-1,1,2)}^{(i,\phi,0/2)}&=\alpha_{\phi 1}(1-P_D)\xi(1-\theta)A(d,K-i-1)
\end{split}
\end{align}

\begin{align}
\begin{split}
\pi_{(K-1,0,0)}^{(i,0,1)}&=\left[P_F+(1-P_F)\theta\right]\left\{F_{\text{off}}(d)\sum_{k=K-i}^{\infty}A(d,k) \right. \\
& \left. +(\alpha_{00}-F_{\text{off}}(d))A(d,K-i-1)\right\}\\
\pi_{(K-1,0,1)}^{(i,0,1)}&=(1-P_F)(1-\xi)(1-\theta)\left\{F_{\text{off}}(d)\sum_{k=K-i}^{\infty}A(d,j) \right. \\
& \left. +(\alpha_{00}-F_{\text{off}}(d))A(d,K-i-1)\right\}\\
\pi_{(K-1,0,2)}^{(i,0,1)}&=(1-P_F)\xi (1-\theta)\left\{F_{\text{off}}(d)\sum_{k=K-i}^{\infty}A(d,j)+ \right. \\
& \left. (\alpha_{00}-F_{\text{off}}(d))A(d,j-i)\right\}
\end{split}
\end{align}

\begin{align}
\begin{split}
\pi_{(K-1,1,0)}^{(i,\phi,1)}&=\alpha_{\phi 1}\left[P_D+(1-P_D)\theta\right]A(d,K-i-1)\\
\pi_{(K-1,1,1)}^{(i,\phi,1)}&=\alpha_{\phi 1}(1-P_D)(1-\xi)(1-\theta)A(d,K-i-1)\\
\pi_{(K-1,1,2)}^{(i,\phi,1)}&=\alpha_{\phi 1}(1-P_D)\xi(1-\theta)A(d,K-i-1)
\end{split}
\end{align}

\begin{align}
\begin{split}
\pi_{(K-1,0,0)}^{(i,1,1)}&=\alpha_{10}\left[P_F+(1-P_F)\theta\right]A(d,K-i-1)\\
\pi_{(K-1,0,1)}^{(i,1,1)}&=\alpha_{10}(1-P_F)(1-\xi)(1-\theta)A(d,K-i-1)\\
\pi_{(K-1,0,2)}^{(i,1,1)}&=\alpha_{10}(1-P_F)\xi (1-\theta)A(d,K-i-1)
\end{split}
\end{align}

\section*{Transitions from $(i,\phi,\psi)$ to $(K,\phi',\psi')$, $i>0$}
\begin{align}
\begin{split}
\pi_{(K,0,0)}^{(i,\phi,0/2)}&=\alpha_{\phi 0}\left[P_F+(1-P_F)\theta\right]\sum_{k=K-i}^{\infty}A(d,k)\\
\pi_{(K,0,1)}^{(i,\phi,0/2)}&=\alpha_{\phi 0}(1-P_F)(1-\xi)(1-\theta)\sum_{k=K-i}^{\infty}A(d,k)\\
\pi_{(K,0,2)}^{(i,\phi,0/2)}&=\alpha_{\phi 0}(1-P_F)\xi (1-\theta)\sum_{k=K-i}^{\infty}A(d,k)\\
\pi_{(K,1,0)}^{(i,\phi,0/2)}&=\alpha_{\phi 1}\left[P_D+(1-P_D)\theta\right]\sum_{k=K-i}^{\infty}A(d,k)\\
\pi_{(K,1,1)}^{(i,\phi,0/2)}&=\alpha_{\phi 1}(1-P_D)(1-\xi)(1-\theta)\sum_{k=K-i}^{\infty}A(d,k)\\
\pi_{(K,1,2)}^{(i,\phi,0/2)}&=\alpha_{\phi 1}(1-P_D)\xi(1-\theta)\sum_{k=K-i}^{\infty}A(d,k)
\end{split}
\end{align}

\begin{align}
\begin{split}
\pi_{(K,0,0)}^{(i,0,1)}\hspace*{-0.3em}&=\left[P_F+(1-P_F)\theta\right]\left(\alpha_{00}-F_{\text{off}}(d)\right)\hspace*{-0.5em}\sum_{k=K-i}^{\infty}A(d,k)\\
\pi_{(K,0,1)}^{(i,0,1)}\hspace*{-0.3em}&=(1-P_F)(1-\xi)(1-\theta)\left(\alpha_{00}-F_{\text{off}}(d)\right)\hspace*{-0.5em}\sum_{k=K-i}^{\infty}\hspace*{-0.4em}a(d,k)\\
\pi_{(K,0,2)}^{(i,0,1)}\hspace*{-0.3em}&=(1-P_F)\xi (1-\theta)\left(\alpha_{00}-F_{\text{off}}(d)\right)\hspace*{-0.5em}\sum_{k=K-i}^{\infty}a(d,k)
\end{split}
\end{align}

\begin{align}
\begin{split}
\pi_{(K,1,0)}^{(i,\phi,1)}&=\alpha_{\phi 1}\left[P_D+(1-P_D)\theta\right]\sum_{k=K-i}^{\infty}A(d,k)\\
\pi_{(K,1,1)}^{(i,\phi,1)}&=\alpha_{\phi 1}(1-P_D)(1-\xi)(1-\theta)\sum_{k=K-i}^{\infty}A(d,k)\\
\pi_{(K,1,2)}^{(i,\phi,1)}&=\alpha_{\phi 1}(1-P_D)\xi(1-\theta)\sum_{k=K-i}^{\infty}A(d,k)
\end{split}
\end{align}

\begin{align}
\begin{split}
\pi_{(K,0,0)}^{(i,1,1)}&=\alpha_{10}\left[P_F+(1-P_F)\theta\right]\sum_{k=K-i}^{\infty}A(d,k)\\
\pi_{(K,0,1)}^{(i,1,1)}&=\alpha_{10}(1-P_F)(1-\xi)(1-\theta)\sum_{k=K-i}^{\infty}A(d,k)\\
\pi_{(K,0,2)}^{(i,1,1)}&=\alpha_{10}(1-P_F)\xi (1-\theta)\sum_{k=K-i}^{\infty}A(d,k)
\end{split}
\end{align}

\bibliographystyle{IEEEtran}
\bibliography{Bib}

\begin{thebibliography}{10}
\providecommand{\url}[1]{#1}
\csname url@samestyle\endcsname
\providecommand{\newblock}{\relax}
\providecommand{\bibinfo}[2]{#2}
\providecommand{\BIBentrySTDinterwordspacing}{\spaceskip=0pt\relax}
\providecommand{\BIBentryALTinterwordstretchfactor}{4}
\providecommand{\BIBentryALTinterwordspacing}{\spaceskip=\fontdimen2\font plus
\BIBentryALTinterwordstretchfactor\fontdimen3\font minus
  \fontdimen4\font\relax}
\providecommand{\BIBforeignlanguage}[2]{{%
\expandafter\ifx\csname l@#1\endcsname\relax
\typeout{** WARNING: IEEEtran.bst: No hyphenation pattern has been}%
\typeout{** loaded for the language `#1'. Using the pattern for}%
\typeout{** the default language instead.}%
\else
\language=\csname l@#1\endcsname
\fi
#2}}
\providecommand{\BIBdecl}{\relax}
\BIBdecl

\bibitem{zhang20196g}
L.~Zhang, Y.-C. Liang, and D.~Niyato, ``6g visions: Mobile ultra-broadband,
  super internet-of-things, and artificial intelligence,'' \emph{China
  Communications}, vol.~16, no.~8, pp. 1--14, 2019.

\bibitem{perera2013context}
C.~Perera, A.~Zaslavsky, P.~Christen, and D.~Georgakopoulos, ``Context aware
  computing for the internet of things: A survey,'' \emph{IEEE communications
  surveys \& tutorials}, vol.~16, no.~1, pp. 414--454, 2013.

\bibitem{samir2019uav}
M.~Samir, S.~Sharafeddine, C.~M. Assi, T.~M. Nguyen, and A.~Ghrayeb, ``Uav
  trajectory planning for data collection from time-constrained iot devices,''
  \emph{IEEE Transactions on Wireless Communications}, vol.~19, no.~1, pp.
  34--46, 2019.

\bibitem{zanella2014internet}
A.~Zanella, N.~Bui, A.~Castellani, L.~Vangelista, and M.~Zorzi, ``Internet of
  things for smart cities,'' \emph{IEEE Internet of Things journal}, vol.~1,
  no.~1, pp. 22--32, 2014.

\bibitem{tragos2013cognitive}
E.~Z. Tragos and V.~Angelakis, ``Cognitive radio inspired m2m communications,''
  in \emph{2013 16th International Symposium on Wireless Personal Multimedia
  Communications (WPMC)}.\hskip 1em plus 0.5em minus 0.4em\relax IEEE, 2013,
  pp. 1--5.

\bibitem{salameh2019spectrum}
H.~A.~B. Salameh, S.~Al-Masri, E.~Benkhelifa, and J.~Lloret, ``Spectrum
  assignment in hardware-constrained cognitive radio iot networks under varying
  channel-quality conditions,'' \emph{IEEE Access}, vol.~7, pp.
  42\,816--42\,825, 2019.

\bibitem{gu2019minimizing}
Y.~Gu, H.~Chen, C.~Zhai, Y.~Li, and B.~Vucetic, ``Minimizing age of information
  in cognitive radio-based iot systems: Underlay or overlay?'' \emph{IEEE
  Internet of Things Journal}, vol.~6, no.~6, pp. 10\,273--10\,288, 2019.

\bibitem{ansere2019reliable}
J.~A. Ansere, G.~Han, H.~Wang, C.~Choi, and C.~Wu, ``A reliable energy
  efficient dynamic spectrum sensing for cognitive radio iot networks,''
  \emph{IEEE Internet of Things Journal}, vol.~6, no.~4, pp. 6748--6759, 2019.

\bibitem{kamalinejad2015wireless}
P.~Kamalinejad, C.~Mahapatra, Z.~Sheng, S.~Mirabbasi, V.~C. Leung, and Y.~L.
  Guan, ``Wireless energy harvesting for the internet of things,'' \emph{IEEE
  Communications Magazine}, vol.~53, no.~6, pp. 102--108, 2015.

\bibitem{bi2015wireless}
S.~Bi, C.~K. Ho, and R.~Zhang, ``Wireless powered communication: Opportunities
  and challenges,'' \emph{IEEE Communications Magazine}, vol.~53, no.~4, pp.
  117--125, 2015.

\bibitem{mohapatra2018spectrum}
S.~Mohapatra, P.~K. Sahoo, and J.-P. Sheu, ``Spectrum allocation with
  guaranteed rendezvous in asynchronous cognitive radio networks for internet
  of things,'' \emph{IEEE Internet of Things Journal}, vol.~6, no.~4, pp.
  6104--6116, 2018.

\bibitem{aslam2018energy}
S.~Aslam, W.~Ejaz, and M.~Ibnkahla, ``Energy and spectral efficient cognitive
  radio sensor networks for internet of things,'' \emph{IEEE Internet of Things
  Journal}, vol.~5, no.~4, pp. 3220--3233, 2018.

\bibitem{salameh2018spectrum}
H.~A.~B. Salameh, S.~Almajali, M.~Ayyash, and H.~Elgala, ``Spectrum assignment
  in cognitive radio networks for internet-of-things delay-sensitive
  applications under jamming attacks,'' \emph{IEEE Internet of Things Journal},
  vol.~5, no.~3, pp. 1904--1913, 2018.

\bibitem{huang2015cutting}
K.~Huang and X.~Zhou, ``Cutting the last wires for mobile communications by
  microwave power transfer,'' \emph{IEEE Communications Magazine}, vol.~53,
  no.~6, pp. 86--93, 2015.

\bibitem{wang2010delay}
S.~Wang, J.~Zhang, and L.~Tong, ``Delay analysis for cognitive radio networks
  with random access: A fluid queue view,'' in \emph{2010 Proceedings IEEE
  INFOCOM}.\hskip 1em plus 0.5em minus 0.4em\relax IEEE, 2010, pp. 1--9.

\bibitem{wang2012characterization}
------, ``A characterization of delay performance of cognitive medium access,''
  \emph{IEEE Transactions on Wireless Communications}, vol.~11, no.~2, pp.
  800--809, 2012.

\bibitem{liang2010delay}
Z.~Liang, S.~Feng, D.~Zhao, and X.~S. Shen, ``Delay performance analysis for
  supporting real-time traffic in a cognitive radio sensor network,''
  \emph{IEEE Transactions on Wireless Communications}, vol.~10, no.~1, pp.
  325--335, 2010.

\bibitem{li2018network}
T.~Li, J.~Yuan, and M.~Torlak, ``Network throughput optimization for random
  access narrowband cognitive radio {Internet} of {Things} ({NB-CR-IoT}),''
  \emph{IEEE Internet of Things Journal}, vol.~5, no.~3, pp. 1436--1448, 2018.

\bibitem{choi2018distributed}
K.~W. Choi, A.~A. Aziz, D.~Setiawan, N.~M. Tran, L.~Ginting, and D.~I. Kim,
  ``Distributed wireless power transfer system for {Internet} of {Things}
  devices,'' \emph{IEEE Internet of Things Journal}, vol.~5, no.~4, pp.
  2657--2671, 2018.

\bibitem{wang2019truthful}
Z.~Wang, T.~Alpcan, J.~S. Evans, and S.~Dey, ``Truthful mechanism design for
  wireless powered network with channel gain reporting,'' \emph{IEEE
  Transactions on Communications}, vol.~67, no.~11, pp. 7966--7979, 2019.

\bibitem{huang2014enabling}
K.~Huang and V.~K. Lau, ``Enabling wireless power transfer in cellular
  networks: Architecture, modeling and deployment,'' \emph{IEEE Transactions on
  Wireless Communications}, vol.~13, no.~2, pp. 902--912, 2014.

\bibitem{baccelli2006role}
F.~Baccelli, S.~Machiraju, D.~Veitch, and J.~C. Bolot, ``The role of pasta in
  network measurement,'' \emph{ACM SIGCOMM Computer Communication Review},
  vol.~36, no.~4, pp. 231--242, 2006.

\bibitem{kleinrock1975queueing}
L.~Kleinrock, ``Queueing systems,'' Tech. Rep., 1975.

\bibitem{janhunen2019wireless}
J.~Janhunen, K.~Mikhaylov, J.~Pet{\"a}j{\"a}j{\"a}rvi, and M.~Sonkki,
  ``Wireless energy transfer powered wireless sensor node for green iot:
  Design, implementation and evaluation,'' \emph{Sensors}, vol.~19, no.~1,
  p.~90, 2019.

\end{thebibliography}

\end{document}